\definecolor{darkred}  {rgb}{0.5,0,0}
\definecolor{darkblue} {rgb}{0,0,0.5}
\definecolor{darkgreen}{rgb}{0,0.5,0}
\crefname{lemma}{Lemma}{Lemmas}
\crefname{proposition}{Proposition}{Propositions}
\crefname{definition}{Definition}{Definitions}
\crefname{theorem}{Theorem}{Theorems}
\crefname{conjecture}{Conjecture}{Conjectures}
\crefname{corollary}{Corollary}{Corollaries}
\crefname{section}{Section}{Sections}
\crefname{appendix}{Appendix}{Appendices}
\crefname{figure}{Fig.}{Figs.}
\crefname{equation}{Eq.}{Eqs.}
\crefname{table}{Table}{Tables}
\crefname{claim}{Claim}{Claims}
\newlist{properties}{enumerate}{10}
\setlist[properties]{label*=\arabic*}
\crefname{propertiesi}{Effective Resistance Property}{Effective Resistance Properties}
\newlist{propertees}{enumerate}{10}
\setlist[propertees]{label*=\roman*}
\crefname{properteesi}{Effective Capacitance Property}{Effective Capacitance Properties}
\newtheorem{theorem}{Theorem}
\newtheorem{lemma}[theorem]{Lemma}
\newtheorem{definition}[theorem]{Definition}
\newtheorem*{conjecture*}{Conjecture}
\theoremstyle{definition}
\newcommand{\ket}[1]{|#1\rangle}
\newcommand{\tth}[0]{\textsuperscript{th}}
\DeclareMathAlphabet{\matheu}{U}{eus}{m}{n}
\newcommand{\sop}[1]{{\mathcal #1}}
\newcommand{\ketbra}[2]{|{#1}\rangle\!\langle{#2}|}
\newcommand{\Gcyc}{{G_\textrm{cyc}}}
\newcommand{\Gd}[1]{{K^G_{#1}}}
\newcommand{\Gbi}{{G_\textrm{bip}}}
\begin{document}

\title{Applications of the quantum algorithm for st-connectivity}
\author[1]{Kai DeLorenzo}
\author[1]{Shelby Kimmel}
\author[1]{R. Teal Witter}
\affil[1]{Middlebury College, Computer Science Department, Middlebury, VT}

\date{}

\maketitle

\begin{abstract}
We present quantum algorithms for various problems related to graph connectivity. We give simple and query-optimal algorithms for cycle detection and odd-length cycle detection (bipartiteness) using a reduction to st-connectivity.  Furthermore, we show that our algorithm for cycle detection has improved performance under the promise of large circuit rank or a small number of edges. We also provide algorithms for detecting even-length cycles and for estimating the circuit rank of a graph. All of our algorithms have logarithmic space complexity. \end{abstract}

\section{Introduction}

Quantum query algorithms are remarkably described by span programs
\cite{reichardt2009span,R01}, a linear algebraic object originally created to
study classical logspace complexity \cite{KW93}. However, finding optimal span
program algorithms can be challenging; while they can be obtained using a
semidefinite program, the size of the program grows exponentially with the
size of the input to the algorithm. Moreover, span programs are designed to
characterize the query complexity of an algorithm, while in practice we also
care about the time and space complexity.

One of the nicest span programs is for the problem of undirected
$st$-connectivity, in which one must decide whether two vertices $s$ and $t$
are connected in a given graph. It is ``nice'' for several reasons:
\begin{itemize}
\item It is easy to describe and understand why it is correct.
\item It corresponds to a quantum algorithm that uses logarithmic (in the number of vertices and edges of the graph) space \cite{BR12,Jeffery2017algorithmsgraph}.
\item The time complexity of the corresponding algorithm is the product of the
query complexity, and the time required to implement a unitary that applies
one step of a quantum walk on the underlying graph \cite{BR12,Jeffery2017algorithmsgraph}.
On the complete graph, for example, the quantum walk step introduces only an additional
logarithmic factor to the complexity \cite{BR12}.
\item The query complexity of the algorithm is determined by two well known graph functions, the effective resistance and effective capacitance \cite{JJKP2018}.
\end{itemize}

Thus one strategy for designing other ``nice'' quantum algorithms is to reduce
a given problem to $st$-connectivity, and then use the span program
$st$-connectivity algorithm. This strategy has proven to be quite successful,
and in fact has produced several optimal or nearly optimal algorithms. There
is a reduction from Boolean formula evaluation to $st$-connectivity
\cite{Nisan:1995:SLC:225058.225101} that produces an optimal quantum algorithm
for read-once formulas \cite{Jeffery2017algorithmsgraph}. There is an optimal
reduction from graph connectivity to $st$-connectivity \cite{JJKP2018}. Cade
et al. use an $st$-connectivity subroutine to create nearly query-optimal
algorithms for cycle detection and bipartiteness
\cite{cade2016time}\footnote{In Ref. \cite{Arins2016}, {\={A}}ri{\c{n}}{\v{s}}
designed algorithms for connectivity and bipartiteness which, while not
strictly reductions to $st$-connectivity, are very closely related.}. Finally,
the $st$-connectivity span program algorithm underlies the learning graph framework \cite{Bel11}, one of
the most successful heuristics for span program algorithm design.

In this work, we follow precisely this strategy for creating ``nice'' quantum
algorithms: we reduce the graph problems of cycle detection, odd-length path
detection, bipartiteness, and even-length cycle detection to
$st$-connectivity. In our reductions, solving the related $st$-connectivity
problem comprises the whole algorithm; in other words, we create a new graph
that has an $st$-path if and only if the original graph has the property in
question.

Additionally, there is an an estimation algorithm closely related to the
$st$-connectivity algorithm that determines the size of the effective
resistance or effective capacitance of the graph
\cite{ito2015approximate,JJKP2018}. Not only is it often useful to estimate
the effective resistance or effective capacitance of a graph, as these
quantities bound the shortest path length and smallest cut size respectively,
but sometimes one can encode quantities of interest as either the effective
capacitance or effective resistance. For example, given a graph $G$, it is
possible to create a new graph whose effective resistance is the average
effective resistance (Kirchoff index) of the original graph \cite{JJKP2018}.
Using this strategy of reduction to effective resistance estimation, we also create an algorithm to estimate the circuit rank of a graph.

\subsection{Contributions and Comparison to Previous Work}

All of our algorithms are in the adjacency matrix model (see \cref{section:background}), in which one can query elements of the adjacency matrix of the input graph. This contrasts with work such as \cite{cade2016time} which study similar problems in the adjacency list model. 

We note all of our algorithms are space efficient, in that the number of qubits
required are logarithmic in the number of edges and vertices of the graph.
(This property is inherited directly from the basic $st$-connectivity span
program.) We do not analyze time complexity, but, as mentioned above, it is
the product of the query complexity, which we analyze, and the time
required to implement certain quantum walk unitaries. (We leave this analysis
for future work.) 

We next discuss the context of each of our results in turn.
In this section, we assume the underlying graph is the complete graph on $n$ vertices to
more easily compare to previous work, although in the main body of the paper,
we show our results  apply more to generic underlying graphs with $n$
vertices and $m$ edges.

\paragraph{Cycle Detection} Cade et al. \cite{cade2016time} describe a nearly optimal $\tilde{O}(n^{3/2})$ quantum
query algorithm for cycle detection via reduction to $st$-connectivity, almost matching the lower bound of
$\Omega(n^{3/2})$ \cite{childs2012quantum}. In this work, we find an
algorithm that removes the log-factors of the previous result, giving an algorithm with optimal
$O(n^{3/2})$ query complexity. Moreover, our algorithm is simpler than that in Ref.
\cite{cade2016time}: their approach requires solving an $st$-connectivity problem
 within a Grover search, while our approach is entirely based on solving
an $st$-connectivity problem.

We furthermore prove that if promised that (in the case of a cycle) the circuit
rank of the graph is at least $r$ or (in the case of no cycles) there are at
most $\mu$ edges, then the query complexity of cycle detection is
$O(\mu\sqrt{n/r})$.

\paragraph{Bipartiteness} An optimal quantum query algorithm for bipartiteness
was created by {\={A}}ri{\c{n}}{\v{s}} \cite{Arins2016}, matching the lower bound of
$\Omega(n^{3/2})$ \cite{zhang2005power}. However, this algorithm was not known
to be time efficient (and did not use a reduction to $st$-connectivity,
although the ideas are quite similar to the approach here and in
\cite{cade2016time}). In Ref. \cite{cade2016time}, Cade et al., using similar ideas as in their
cycle detection algorithm, create a bipartiteness checking algorithm using a
reduction to $st$-connectivity that is again embedded in a search loop, which
is optimal up to logarithmic factors in query complexity. Our algorithm for bipartiteness removes
the logarithmic factors of \cite{cade2016time}, and so recovers the optimal query complexity of \cite{Arins2016},
while retaining the simplicity of the reduction to $st$-connectivity. 

\paragraph{Even-length Cycle Detection} The problem of detecting an even-length cycle
can provide insight into the structure of the graph. For example, it is
straightforward to see that in a graph with no even cycles, no edge can be
involved in more than one cycle. Classically this problem requires
$\Theta(n^2)$ queries \cite{yuster1997finding}. We are not aware of an
existing quantum algorithm for this problem; we provide an $O(n^{3/2})$ query
algorithm.

\paragraph{Estimation of Circuit Rank} Circuit rank parameterizes the number
of cycles in a graph: it is the number of edges that must be removed before
there are no cycles left in a graph. It has also been used to describe the
complexity of computer programs \cite{mccabe1976complexity}. We give an
algorithm to estimate the circuit rank $r$ to multiplicative error $\epsilon$
with query complexity $\widetilde{O}\left(\epsilon^{-3/2}\sqrt{n^4/r}\right)$
in the generic case and query complexity
$\widetilde{O}\left(\epsilon^{-3/2}\sqrt{n^3/r}\right)$ when promised that the
graph is a cactus graph. When additionally promised that the circuit rank is
large, these algorithms can have non-trivial query complexity. We are aware of no
other classical or quantum query algorithms that determine or estimate this
quantity.

\paragraph{Odd-length Path} We provide an algorithm to determine whether
there is an odd-length path between two specified vertices that uses
$O(n^{3/2})$ queries. While perhaps not the most interesting problem on its own, we effectively
leverage this algorithm as a subroutine in several of our other constructions.

\subsection{Open Problems}

Throughout this paper, our strategy is to take a graph $G$, use it to create a
new graph $G'$, such that there is an $st$-path through $G'$ if and only if
$G$ has a certain property. We analyze the query complexity of the algorithms we create in detail, but not the time complexity. The
time complexities of our algorithms depend on the time required to implement
one step of a quantum walk on the graph $G'$ (see $\mathsf{U}$ from \cref{thm:stconn} and Ref. \cite{Jeffery2017algorithmsgraph}) \footnote{In Ref. \cite{cade2016time}, they claim that their algorithms are time efficient, but their time analysis only considers the case of the underlying graph being the complete graph, while the actual graph used in their algorithms is not the complete graph. We expect that their algorithms can be implemented efficiently, but more work is needed to show this.}. We strongly suspect that the highly structured nature of the graphs $G'$ we consider would yield time-efficient algorithms, but we have not done a full analysis. 

In the case of our algorithm for cycle detection, we create a graph $G'$ whose
effective resistance is the circuit rank of the original graph $G$. For the
graphs we design for the other algorithms in this paper, do the effective
resistance or effective capacitance have relevant meanings?

The query complexity of our algorithm for estimating the circuit rank of a graph
depends on bounding a quantity called the approximate negative witness. While
the best bound we currently have is $O(n^4)$, we believe this is not tight.
Obtaining a better bound would not only be interesting for these results, but
could provide insight into more general quantum estimation algorithms.

Several of our results rely on a graph that tests for paths of odd length,
i.e. those whose length modulo 2 is 1. Is there a way to adapt our algorithm
to test for paths of arbitrary modulus?

Ref. \cite{alvarez2000compendium} provides a list of complete problems for symmetric logarithmic space (SL), of which $st$-connectivity is one such problem. It would be interesting to study the query complexity of these problems to see if reducing to $st$-connectivity always gives an optimal approach.

Finally, it would be nice to improve the quantum lower bounds and classical bounds for several of these
problems. In particular, we would like to obtain better quantum lower bounds for the even-length cycle detection and odd-length path
detection problems. (For the later, the best quantum lower bound is $\Omega(n)$
\cite{BR12}.) We expect that the promise of large circuit rank also aids a classical algorithm for cycle detection, and it would be interesting to know by how much, in order to compare to our quantum algorithm.

\section{Preliminaries}\label{section:background}

We consider undirected graphs $G=(V,E)$ where $V$ is a set of vertices and $E$
a set of edges; we often use $E(G)$ and $V(G)$ to denote the sets of edges and
vertices of a graph $G$ when there are multiple graphs involved. If
clear which graph we are referring to, we will use $n$ for the number of
vertices in the graph, and $m$ for the number of edges. For ease of notation,
we associate each edge with a unique label $\ell$. For example, we refer to an
edge between vertices $u$ and $v$ labeled by $\ell$ as $\{u,v\}$, or simply as
$\ell$. 
In general, we could consider a weighting function on the edges or consider graphs with multi-edges (see \cite{JJKP2018} for more details on how these modifications are implemented) but for our purposes, we will always consider all edges to have weight $1$, and we will only consider graphs with a at most a single edge between any two vertices.

We will use the following notation regarding spanning trees: if $G$ is a
connected graph and $\ell\in E(G)$, then we use $t_\ell(G)$ to denote the
number of unique spanning trees of $G$ that include edge $\ell$, and we use
$t(G)$ to denote the total number of unique spanning trees of $G.$

In \cref{sec:Cycle} we describe a quantum algorithm for estimating the circuit rank of a graph, which is a quantity that is relevant for a number of applications, like determining the robustness of a network, analyzing chemical structure \cite{doi:10.1021/ci60005a013}, or parameterizing the complexity of a program \cite{mccabe1976complexity}.
\begin{definition}[Circuit Rank]\label{def:circRank}
The \emph{circuit rank} of a graph with $m$ edges and $n$ vertices is $m-n+\kappa$ where $\kappa$ is the
number of connected components. Alternatively, the circuit rank of a graph is
the minimum number of edges that must be removed to break all cycles and turn
the graph into a tree or forest.
\end{definition}

We also consider a special class of graphs called cactus graphs:
\begin{definition}
A cactus graph is a connected graph in which any two simple cycles share at most one common vertex.
\end{definition}
\noindent We will in particular use cactus forests, in which all components of
the graph are cacti. For cactus forests, the circuit rank is simply the total
number of cycles in the graph.

The final type of graph we need is the bipartite double graph:
\begin{definition}
Given a graph $G$, the bipartite double of $G$, denoted $K^G$, is the graph that consists of two copies of the vertices of $G$ (with vertex $v\in V(G)$ labeled as $v_0$ in the first copy and $v_1$ in the second), with all original edges removed, and edges $\{u_0,v_1\}$ and $\{v_0,u_1\}$ created for each edge $\{u,v\}\in E(G).$ This graph is also known as the Kronecker cover of $G$, or $G\times K_2$, and its adjacency matrix is given by $G\otimes X$ (where $X$ is the Pauli $X$ operator.)
\end{definition}

We associate edges of $G$ with literals of a string $x\in \{0,1\}^N$, where
$N\leq |E|$, and a literal is either $x_i$ or $\overline{x}_i$ for $i\in[N]$.  
The subgraph $G(x)$ of $G$ contains an edge $\ell$ if $\ell$
is associated with $x_i$ and $x_i=1$, or if $\ell$ is
associated with $\overline{x}_i$ and $x_i=0$. (See
\cite{JJKP2018} for more details on this association.) 

We assume that we have complete knowledge of $G$, and access to
$x\in\{0,1\}^N$ via a black box unitary (oracle) $O_x$. This oracle acts as
$O_x\ket{i}\ket{b}=\ket{i}\ket{b\oplus x_i}$, where $x_i$ is the $i\tth$ bit
of $x$. Then our goal is to use $O_x$ as few times as possible to determine a property of the graph
$G(x).$ The number of uses of $O_x$ required for a given application is called the query complexity.

We will be applying and analyzing an algorithm for $st$-connectivity, which is
the problem of deciding whether two nodes $s,t\in V(G)$ are connected in a
graph $G(x)$, where $G$ is initially known, but $x$ must be determined using
the oracle, and we are promised $x\in X$, where $X\subseteq\{0,1\}^N$. An
$st$-path is a series of edges connecting $s$ to $t$.

Given two instances of $st$-connectivity, they can be combined in
\emph{parallel}, where the two $s$ vertices are identified and labeled as the
new $s$, and the two $t$ vertices are identified and labeled as the new $t$.
This new $st$-connectivity problem encodes the logical \textsc{or} of the
original connectivity problems, in that the new graph is connected if and only
if at least one of the original graphs was connected. Two instances of
$st$-connectivity can also be combined in \emph{series}, where the $s$ vertex
of one graph is identified with the $t$ vertex of the other graph, and
relabeled using a label not previously used for a vertex in either graph.
This new $st$-connectivity encodes the logical \textsc{and} of the original
connectivity problems, in that the new graph is connected if and only if both
of the original graphs were connected. This correspondence was noted in
\cite{Nisan:1995:SLC:225058.225101} and applied to quantum query algorithms
for Boolean formulas in \cite{Jeffery2017algorithmsgraph} and for determining
total connectivity in \cite{JJKP2018}.

The key figures of merit for determining the query complexity of the
span-program-based $st$-connectivity quantum algorithm are effective
resistance and effective capacitance \cite{JJKP2018}. We use $R_{s,t}(G(x))$
to denote the effective resistance between vertices $s$ and $t$ in a graph
$G(x)$, and we use $C_{s,t}(G(x))$ to denote the effective capacitance between
vertices $s$ and $t$ in $G(x)$. These two functions were originally formulated to
characterize electrical circuits, but are also important functions in graph
theory. For example, effective resistance is related to the hitting time of a
random walk on a graph. (See \cite{bollobas2013modern} for more information on
effective resistance and capacitance in the context of graph theory.) For this
paper, we don't require a formal definition of these functions, but can
instead use a few of their well known and easily derived properties (see
\cite{JJKP2018} for formal definitions):

\paragraph{Effective Resistance}
\begin{properties}
\item If $G$ consists of a single edge between $s$ and $t$, and $G(x)=G$, then $R_{s,t}(G(x))=1$.\label{prop:oneEdge}
\item If $s$ and $t$ are not connected in $G(x)$, then $R_{s,t}(G(x))=\infty$.\label{prop:ERinf}
\item If $G$ consists of subgraphs $G_1$ and $G_2$ connected in series as described above, then\\ $R_{s,t}(G(x))=R_{s,t}(G_1(x))+R_{s,t}(G_2(x))$.\label{prop:ERseries}
\item If $G$ consists of subgraphs $G_1$ and $G_2$ connected in parallel as described above, then\\ $\left(R_{s,t}(G(x))\right)^{-1}=\left(R_{s,t}(G_1(x))\right)^{-1}+\left(R_{s,t}(G_2(x))\right)^{-1}$.\label{prop:ERparallel}
\item If $G(x)$ is a subgraph of $G(y)$, then $R_{s,t}(G(x))\geq R_{s,t}(G(y))$. \label{prop:ERsubgraph}
\item If $s$ and $t$ are connected in $G(x)$, then $R_{s,t}(G(x))\leq d$, where $d$
is the length of the shortest path from $s$ to $t.$\label{prop:length}
\end{properties}

\paragraph{Effective Capacitance}
\begin{propertees}
\item If $G$ consists of a single edge between $s$ and $t$, and $G(x)$ does not include the edge $\{s,t\}$, then $C_{s,t}(G(x))=1$.\label{prop:noEdge}
\item If $s$ and $t$ are connected in $G(x)$, then $C_{s,t}(G(x)=\infty$.\label{prop:ECinf}
\item If $G$ consists of subgraphs $G_1$ and $G_2$ connected in series as described above, then\\ $\left(R_{s,t}(G(x))\right)^{-1}=\left(R_{s,t}(G_1(x))\right)^{-1}+\left(R_{s,t}(G_2(x))\right)^{-1}$.\label{prop:ECseries}
\item If $G$ consists of subgraphs $G_1$ and $G_2$ connected in parallel, then\\ $R_{s,t}(G(x))=R_{s,t}(G_1(x))+R_{s,t}(G_2(x))$.\label{prop:ECparallel}
\item If $G(x)$ is a subgraph of $G(y)$, then $C_{s,t}(G(x))\leq C_{s,t}(G(y))$.\label{prop:ECsubgraph}
\item If $s$ and $t$ are not connected in $G(x)$, $C_{s,t}(G(x))$ is less than the size of the smallest cut in $G$ between $s$ and $t$.\label{prop:ECcut}
\end{propertees}

We analyze our algorithms using these properties rather than first principles to demonstrate the
relative ease of bounding the query complexity of span program algorithms for $st$-connectivity
problems.

Now we can describe the performance of the span program algorithm for deciding $st$-connectivity:
\begin{theorem}\cite{JJKP2018}\label{thm:stconn}
Let $G=(V,E)$ be a graph with $s,t\in V(G)$. Then there is a span program algorithm whose bounded-error
quantum query complexity of evaluating whether $s$ and $t$ are connected in $G(x)$ promised $x\in X$ and $X\subseteq\{0,1\}^N$ is
\begin{align}\label{equation:ourBound}
O\left(\sqrt{\max_{\substack{x\in X \\ R_{s,t}(G(x))\neq\infty}}R_{s,t}(G(x))\times\max_{\substack{x\in X \\ C_{s,t}(G(x))\neq\infty}}C_{s,t}(G(x))}\right). 
\end{align}
Furthermore, the space complexity is
\begin{align}
O(\max\{\log(|E|),\log(|V|)\}),
\end{align}
and the time complexity is $\mathsf{U}$ times the query complexity,
where $\mathsf{U}$ is the time required to perform one step of a quantum walk on $G$ (see \cite{Jeffery2017algorithmsgraph}).
\end{theorem}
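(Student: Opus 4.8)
The plan is to exhibit the canonical span program for $st$-connectivity, identify its positive and negative witness complexities with effective resistance and effective capacitance respectively, and then feed these into the generic compiler that turns any span program into a bounded-error quantum query algorithm. First I would set up the program over the vector space spanned by the vertices: take the target vector $\tau = \ket{s} - \ket{t}$, and to each edge $\ell = \{u,v\} \in E$ assign the input vector $\ket{u} - \ket{v}$, marking it available exactly when the literal controlling $\ell$ evaluates so that $\ell \in G(x)$. The correctness check is that $\tau$ lies in the span of the available edge vectors if and only if $s$ and $t$ are connected in $G(x)$: the ``if'' direction telescopes the edge vectors along any $st$-path, and the ``only if'' direction uses the fact that the span of the edge vectors of a connected component consists precisely of the zero-sum vectors supported on that component, so $\ket{s}-\ket{t}$ is attainable only when $s$ and $t$ lie in a common component.

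Next I would compute the two witness sizes. A positive witness is a set of coefficients $\alpha_\ell$ with $\sum_\ell \alpha_\ell(\ket{u}-\ket{v}) = \tau$, summing over available edges; these coefficients are exactly a unit $st$-flow, and the minimal witness size $\sum_\ell \alpha_\ell^2$ is the energy of the minimum-energy unit flow, which by Thomson's principle equals $R_{s,t}(G(x))$. Dually, a negative witness is a potential (voltage) assignment $\omega$ on the vertices that is orthogonal to every available edge vector -- hence constant on each connected component of $G(x)$ -- while having a unit drop between $s$ and $t$; its witness size sums the squared potential differences across the edges of $G$, and the optimal value equals the effective capacitance $C_{s,t}(G(x))$. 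Because positive witnesses exist only for connected (\textsc{yes}) instances and negative witnesses only for disconnected (\textsc{no}) instances, maximizing over the promise set $X$ reproduces exactly the two restricted maxima in \eqref{equation:ourBound}, with the $\neq\infty$ conditions selecting the connected and disconnected inputs.

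With the positive complexity equal to $\max_x R_{s,t}$ and the negative complexity equal to $\max_x C_{s,t}$, the query bound follows from the general theorem that a span program $P$ compiles to a bounded-error quantum algorithm making $O(\sqrt{W_+(P)\,W_-(P)})$ queries. For the implementation claims I would observe that this compiled algorithm is phase estimation of a unitary $\mathsf{U}$ built from two reflections on the space indexed by the vertices and edges of $G$; such a register costs $O(\max\{\log|E|,\log|V|\})$ qubits, giving the stated space bound, and each query invokes $\mathsf{U}$ once, so the time complexity is $\mathsf{U}$ times the query complexity.

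The step I expect to be the main obstacle is the witness-size calculation: pinning down the electrical-network identities so that the minimal positive and negative witness norms equal $R_{s,t}$ and $C_{s,t}$ exactly, not merely up to constants. The positive side is the familiar Thomson's-principle variational characterization, but verifying that the negative-witness optimization is the genuine convex dual of the flow problem -- and that it collapses to the correct degenerate values when $s$ and $t$ are disconnected or connected (matching the effective-resistance and effective-capacitance properties stated above) -- is the linear-algebraic heart of the argument.
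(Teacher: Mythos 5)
Your proposal is correct and follows the same route as the source the paper relies on: the paper states \cref{thm:stconn} as an imported result from \cite{JJKP2018} without proof, and your sketch---the incidence-vector span program with target $\ket{s}-\ket{t}$, positive witness size equal to $R_{s,t}$ via Thomson's principle, negative witness size equal to $C_{s,t}$ as its dual, compiled through the generic $O\bigl(\sqrt{W_+W_-}\bigr)$ span program algorithm---is exactly the argument given there. The only point to be careful about (absorbed by the big-$O$) is the constant-factor bookkeeping from representing each undirected edge by a pair of directed edge vectors.
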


Ito and Jeffery describe an algorithm that can be used to estimate $R_{s,t}(G(x))$. It depends on a quantity called the negative witness size, which we denote $\widetilde{R}_-(x,G)$ (this is the quantity $\widetilde{w}_-
(x)$ of \cite{ito2015approximate} tailored to the case of the $st$-connectivity span program). Let $\sop L(U,\mathbb{R})$ be the set of linear maps from a set $U$ to $\mathbb{R}$. Then we have the following definition
\begin{definition}[See Theorem 4.2, \cite{ito2015approximate}]
Let $G=(V,E)$ with $s,t\in V$. If $G(x)$ is connected from $s$ to $t$, let $V_x\subseteq V$ be the set of vertices connected to both $s$ and $t$. Then there is a unique map $\sop V_x\in \sop L(V_x,\mathbb{R})$ such that $\sop V_x(s)=1$, $\sop V_x(t)=0$, and $\sum_{\{u,v\}\in E(G(x))}\left(\sop V_x(u)-\sop V_x(v)\right)^2$ is minimized. Then the negative approximate witness size of input $x$ on the graph $G$ is
\begin{align}\label{eq:NegAppWitdef}
\widetilde{R}_-(x,G)=\min_{\sop V\in \sop L(V,\mathbb{R}):\sop V(u)=\sop V_x(u) \textrm{ if }u\in V_x}\sum_{\{u,v\}\in E}\left(\sop V(u)-\sop V(v)\right)^2.
\end{align}

\end{definition}

\begin{theorem}\label{thm:witSizeEst}
Let $G$ be a graph with $s,t\in V(G)$. Then the bounded-error quantum query
complexity of estimating $R_{s,t}(G(x))$ to multiplicative error $\epsilon$
promised $s$ and $t$ are connected in $G(x)$ and $x\in X$ for $X\subseteq\{0,1\}^N$ is
$\widetilde{O}\left(\epsilon^{-3/2}\sqrt{R_{s,t}(G(x))\widetilde{R}_-}\right)$,
where $\widetilde{R}_-=\max_{x\in X}\widetilde{R}_-(x,G).$
\end{theorem}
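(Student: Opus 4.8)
The plan is to obtain this statement as a direct specialization of the general span-program witness-size estimation algorithm of Ito and Jeffery \cite{ito2015approximate} to the canonical $st$-connectivity span program. The key structural fact I would invoke first is that $st$-connectivity admits a well-understood span program (the one underlying \cref{thm:stconn}) for which it is known \cite{JJKP2018} that, on inputs $x$ where $s$ and $t$ are connected in $G(x)$, the positive witness size $w_+(x)$ equals the effective resistance $R_{s,t}(G(x))$ (and dually the negative witness size equals $C_{s,t}(G(x))$). This is what lets us translate an abstract span-program quantity into the graph-theoretic quantity we actually wish to estimate.

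Next I would recall the general Ito--Jeffery result (Theorem 4.2 of \cite{ito2015approximate}): for any span program $P$ one can estimate the positive witness size $w_+(x)$ to multiplicative error $\epsilon$, on inputs promised to be positive, using $\widetilde{O}\!\left(\epsilon^{-3/2}\sqrt{w_+(x)\,\widetilde{w}_-}\right)$ queries, where $\widetilde{w}_- = \max_{x\in X}\widetilde{w}_-(x)$ is the worst-case approximate negative witness size over the promise set. The bound is instance-dependent in a specific way: the value being estimated, $w_+(x)$, enters at the actual input $x$, whereas the approximate negative witness enters as a worst case over $X$, since the algorithm must succeed uniformly on the promise. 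This is exactly the asymmetry displayed in the theorem statement, so the shape of the final bound will come out correctly.

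I would then apply this with $P$ the $st$-connectivity span program and substitute $w_+(x) = R_{s,t}(G(x))$, which immediately produces the claimed query complexity provided we identify the abstract approximate negative witness $\widetilde{w}_-(x)$ of this span program with the quantity $\widetilde{R}_-(x,G)$ defined just above the theorem. That identification is precisely the content of the preceding definition: the approximate negative witness minimizes a quadratic form over vectors in the span program's space subject to agreeing with the harmonic positive-witness potential on the vertices reachable from $s$ and $t$, and this unwinds to the Dirichlet-energy minimization $\sum_{\{u,v\}\in E}(\sop V(u)-\sop V(v))^2$ over extensions $\sop V$ of the optimal potential $\sop V_x$. Setting $\widetilde{R}_- = \max_{x\in X}\widetilde{R}_-(x,G)$ then yields the stated bound with no further algorithmic work.

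The main obstacle is the verification that the span-program approximate negative witness genuinely reduces to the graph optimization in the stated definition. Concretely, this requires writing out the vector space, target vector, and input vectors of the $st$-connectivity span program and checking that the optimality conditions for $\widetilde{w}_-(x)$ become the discrete-harmonic (minimum Dirichlet energy) conditions that define $\sop V_x$ on $V_x$ together with the energy-minimizing extension to all of $V$. Once this translation is established, the remainder of the argument is a mechanical substitution into the Ito--Jeffery bound, so no new quantum-algorithmic ideas are needed beyond invoking their theorem.
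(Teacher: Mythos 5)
Your proposal matches the paper's treatment: the paper states this theorem without a separate proof, presenting it as the direct specialization of Ito and Jeffery's witness-size estimation result (Theorem 4.2 of \cite{ito2015approximate}) to the $st$-connectivity span program, with the positive witness size identified with $R_{s,t}(G(x))$ via \cite{JJKP2018} and the approximate negative witness identified with $\widetilde{R}_-(x,G)$ as in the definition immediately preceding the theorem. This is exactly the route you describe, so your approach is correct and essentially identical to the paper's.
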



\section{Quantum Algorithms for Detecting and Characterizing Cycles}\label{sec:Cycle}

In this section, we prove the following results on detecting and characterizing cycles:
\begin{theorem}\label{thm:cycleDetection}
Let $G$ be the complete graph on $n$ vertices. If we are promised that either
$G(x)$ is connected with circuit rank at least $r$, or $G(x)$ is not connected
and contains at most $\mu$ edges, then the bounded-error quantum query
complexity of detecting a cycle in $G(x)$ is $O(\mu\sqrt{n/r})$.
\end{theorem}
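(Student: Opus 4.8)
The plan is to reduce cycle detection to a single instance of $st$-connectivity and then invoke \cref{thm:stconn}. Concretely, from the complete graph $G$ I would build an auxiliary graph $G'$ with distinguished vertices $s,t$ so that $G'(x)$ has an $st$-path if and only if $G(x)$ contains a cycle. The natural candidate is to take, for each edge $e=\{u,v\}$ of $G$, a gadget that is connected exactly when $e$ is present in $G(x)$ and $u,v$ remain connected in $G(x)\setminus e$ (i.e.\ when $e$ lies on a cycle), and to place all these gadgets in parallel, so that $G'(x)$ encodes the \textsc{or} over edges of the predicate ``$e$ lies on a cycle.'' With such a reduction in hand, \cref{thm:stconn} bounds the query complexity by $O(\sqrt{R_{\max}\,C_{\max}})$, where $R_{\max}$ is the largest effective resistance over the cycle (connected) instances and $C_{\max}$ is the largest effective capacitance over the acyclic (disconnected) instances; the whole proof then reduces to bounding these two quantities under the promise.

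For the resistance bound I would use the promise that the circuit rank is at least $r$. Since deleting every edge that lies on a cycle leaves a forest, \cref{def:circRank} forces at least $r$ edges of $G(x)$ to lie on cycles, and hence at least $r$ of the parallel gadgets to be connected. Each such gadget is a series composition of a single present edge and a connectivity subinstance, so by \cref{prop:ERseries}, \cref{prop:oneEdge}, and \cref{prop:length} its resistance is at most $1$ plus the length of a shortest alternate path between $u$ and $v$, i.e.\ $O(n)$. Combining at least $r$ such gadgets in parallel via \cref{prop:ERparallel} then gives $R_{\max}=O(n/r)$.

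For the capacitance bound I would use the promise of at most $\mu$ edges in the acyclic case. Here $G'(x)$ is disconnected, and I would bound $C_{s,t}(G'(x))$ using \cref{prop:ECparallel} to split the contribution across gadgets and \cref{prop:ECseries}, \cref{prop:noEdge}, and \cref{prop:ECcut} within each gadget, aiming to show $C_{\max}=O(\mu^2)$: only the $\le\mu$ present edges should be able to contribute nontrivially, so the total capacitance ought to scale with $\mu$ rather than with the number of vertices. Putting the two bounds together yields query complexity $O\!\left(\sqrt{(n/r)\cdot\mu^2}\right)=O(\mu\sqrt{n/r})$, as claimed.

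The main obstacle is the capacitance bound. The naive cut estimate of \cref{prop:ECcut} is taken in the large fixed graph $G'$ and is therefore insensitive to the sparsity of $G(x)$; moreover, a direct edge-by-edge accounting of a parallel-of-series construction can pick up a contribution that grows with $n$ (through the many absent potential edges and the large alternate components seen by each present edge) rather than with $\mu$. Getting the clean $O(\mu^2)$ dependence will therefore require either a more economical choice of $G'$ than the all-pairs gadget above, or an explicit negative (potential) witness that charges only the $\le\mu$ present edges. I expect the design of $G'$ together with its matching capacitance witness to be the delicate part of the argument; the resistance bound, by contrast, follows directly from the stated effective-resistance properties and the definition of circuit rank.
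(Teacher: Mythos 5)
Your reduction is exactly the paper's $\Gcyc$ (the series gadget $G^1_\ell$ followed by $G^-_\ell$, all composed in parallel), and your resistance argument is sound as far as it goes, but the way you allocate the budget between resistance and capacitance cannot be realized, and you have correctly sensed where it breaks. The target $C_{\max}=O(\mu^2)$ is provably false for this construction: capacitance is charged by \emph{absent} edges, not present ones (a negative witness must place a potential drop across every absent edge separating $s$ from $t$), so each of the $\binom{n}{2}-\mu$ gadgets whose test edge is missing contributes $\Omega(1)$ to $C_{s,t}(\Gcyc(x))$ — by \cref{prop:ECseries} its capacitance is at least $\bigl(1+1/C_{s,t}(G^-_\ell(x))\bigr)^{-1}\geq 1/3$, since $G^-_\ell$ contains a length-$2$ $st$-path and hence has capacitance at least $1/2$. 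Summing in parallel already gives $C_{s,t}(\Gcyc(x))=\Omega(n^2)$, and the $\mu$ present-edge gadgets add $O(n\mu)$ each. Because effective capacitance is itself a minimum over witnesses, no cleverer witness can evade this; and replacing $G'$ wholesale would forfeit your resistance analysis. So your flagged ``main obstacle'' is fatal to the $O(\mu^2)$ plan, not merely delicate.

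The paper closes the gap on the \emph{resistance} side instead. Where you prove $R_{\max}=O(n/r)$ by counting at least $r$ cycle edges and paying $O(n)$ per gadget, \cref{lem:resCircRank} establishes the exact identity $R_{s,t}(\Gcyc(x))=1/r$ — a factor-$n$ improvement. The proof uses \cref{thm:resSpanTree}, $R_{u,v}(G)=t_\ell(G)/t(G)$, together with a double count of edge/spanning-tree incidences in each connected component to show $\sum_{\{u,v\}\in E(G(x))}\bigl(1-R_{u,v}(G(x))\bigr)=m-n+\kappa=r$, which equals $1/R_{s,t}(\Gcyc(x))$ by the series/parallel rules (\cref{app:RGcyc}). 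Paired with \cref{lemma:effCapCycle}'s bound $C_{s,t}(\Gcyc(x))=O(n\mu^2)$ — essentially the ``naive'' cut estimate you dismiss, $O(n\mu)$ per present-edge gadget times $\mu$ gadgets — this gives $O\bigl(\sqrt{(1/r)\cdot n\mu^2}\bigr)=O(\mu\sqrt{n/r})$. The missing idea in your write-up is therefore the resistance/circuit-rank identity via spanning trees. (A side remark: the $\Omega(n^2)$ contribution of the absent-edge gadgets identified above is not accounted for in \cref{lemma:effCapCycle} either, which tacitly needs $\mu=\Omega(\sqrt{n})$ for the stated bound; but that issue is orthogonal to the gap in your proposal.)
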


\begin{theorem}\label{thm:circuitRankEst}
Given a generic graph $G$ with $m$ edges, and a parameter $\epsilon\ll 1$
(here $\epsilon$ can be a constant or can depend on the input) there is a
quantum algorithm that estimates the circuit rank $r$ of $G(x)$ to
multiplicative error $\epsilon$ using
$\widetilde{O}\left(\epsilon^{-3/2}\sqrt{m\mu/r}\right)$ applications of $O_x$, under the promise that $G(x)$ has at most $\mu$ edges.
\end{theorem}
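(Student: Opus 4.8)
The plan is to reduce circuit-rank estimation to the effective-resistance estimation guaranteed by \cref{thm:witSizeEst}. First I would build the same graph $G'$ used for cycle detection in \cref{thm:cycleDetection}: for every edge $\ell=\{u,v\}$ of $G$ form a gadget consisting of the single literal edge $\ell$ placed in series with an $st$-connectivity instance testing whether $u$ and $v$ remain connected in $G\setminus\ell$, and wire all $m$ gadgets in parallel between one common $s$ and one common $t$. Writing $R_\ell=R_{u,v}(G(x))$, \cref{prop:oneEdge,prop:ERseries,prop:ERparallel} show that the gadget for an edge $\ell\in G(x)$ has conductance $\bigl(1+R_{u,v}(G(x)\setminus\ell)\bigr)^{-1}=1-R_\ell$, whereas a gadget with $\ell\notin G(x)$ is broken and contributes nothing; summing conductances in parallel gives
\begin{align*}
\bigl(R_{s,t}(G'(x))\bigr)^{-1}=\sum_{\ell\in E(G(x))}\bigl(1-R_\ell\bigr)=|E(G(x))|-\sum_{\ell\in E(G(x))}R_\ell.
\end{align*}
By Foster's theorem the final sum equals the graphic-matroid rank $n-\kappa$ of $G(x)$, so the right-hand side is $|E(G(x))|-n+\kappa$, which by \cref{def:circRank} is exactly the circuit rank $r$; that is, $R_{s,t}(G'(x))=1/r$.

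With this reduction in hand I would run the estimator of \cref{thm:witSizeEst} on $G'$ to approximate $R_{s,t}(G'(x))=1/r$ to multiplicative error $\epsilon$, and then invert: because $x\mapsto 1/x$ preserves multiplicative error up to $O(\epsilon)$ for $\epsilon\ll1$, this yields $r$ to multiplicative error $O(\epsilon)$. (The hypothesis that $s$ and $t$ be connected in $G'(x)$ holds exactly when $G(x)$ contains a cycle, i.e.\ $r\geq1$, which is the relevant regime.) Substituting $R_{s,t}(G'(x))=1/r$ into the bound of \cref{thm:witSizeEst} gives query complexity
\begin{align*}
\widetilde{O}\left(\epsilon^{-3/2}\sqrt{R_{s,t}(G'(x))\,\widetilde{R}_-}\right)=\widetilde{O}\left(\epsilon^{-3/2}\sqrt{\widetilde{R}_-/r}\right),\qquad \widetilde{R}_-=\max_{x}\widetilde{R}_-(x,G').
\end{align*}
Thus everything reduces to proving $\widetilde{R}_-=O(m\mu)$ under the promise that $G(x)$ has at most $\mu$ edges, which then produces the claimed $\widetilde{O}\bigl(\epsilon^{-3/2}\sqrt{m\mu/r}\bigr)$.

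The main obstacle is precisely this bound on the negative witness. By \cref{eq:NegAppWitdef} I must exhibit one potential $\sop V$ on all of $V(G')$ that agrees with the harmonic potential $\sop V_x$ on the set $V_x$ of vertices connected to both $s$ and $t$, and whose energy $\sum_{\{u,v\}\in E(G')}(\sop V(u)-\sop V(v))^2$ is $O(m\mu)$. The promise $|E(G(x))|\leq\mu$ forces at most $\mu$ of the $m$ gadgets to have their literal edge present, so at most $\mu$ gadgets are ``active''; my candidate extension keeps $\sop V=\sop V_x$ on these active gadgets (each a copy of $G$ with at most $m$ edges, for $O(m\mu)$ edges total) and sets $\sop V$ to a single constant on every vertex of each remaining gadget, so that internal edges of inactive gadgets contribute zero. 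Since $\sop V_x\in[0,1]$ by the maximum principle, every contributing edge adds at most a constant, yielding the target $O(m\mu)$. The delicate point — and the reason the current bound is likely loose, as flagged among the open problems — is controlling the energy at the interfaces where inactive gadgets attach to the shared vertices $s$ and $t$, together with verifying that $\sop V_x$ itself carries only $O(m\mu)$ energy; making this accounting rigorous is the crux of the argument.
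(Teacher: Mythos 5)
Your proposal is correct and follows essentially the same route as the paper: the same graph $\Gcyc$, the same identity $R_{s,t}(\Gcyc(x))=1/r$ (which the paper derives by spanning-tree counting in \cref{lem:resCircRank}, i.e., a proof of the Foster-type identity you invoke), the same inversion of the estimate from \cref{thm:witSizeEst}, and the same negative-witness potential as in \cref{lemm:negWit}. The ``delicate point'' you flag is resolved in one line: taking the constant on each inactive gadget to be $0$, only its single interface edge at $s$ contributes (at most $1$ per gadget, hence $O(m)$ in total), while on the at most $\mu$ active gadgets every value of $\sop V$ lies in $[0,1]$, so each of their at most $m\mu$ edges contributes at most $1$.
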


A few notes on these theorems:
\begin{itemize}
\item \cref{thm:cycleDetection} has a worst case upper bound of $O(n^{3/2})$, which matches the optimal lower bound. This is because $r\geq 1$ ($r$ takes value $1$ in the case of a single cycle) and $\mu\leq n-1$ (since a graph without cycles must be a forest). 
\item In \cref{thm:circuitRankEst}, if $r$ and $\epsilon$ are $O(1)$ and if nothing is known about $\mu$ (in which case it could be as large as $m$), one would need to query all edges of the graph. However, given a promise that $r$ is large, for example if $r=\Omega(m^\beta)$ for a positive constant $\beta$, or a promise on $\mu$, we can do better than the trivial classical algorithm of querying all edges.

\end{itemize}

We prove both of these results using a reduction from cycle detection to $st$-connectivity. Specifically we construct a graph $\Gcyc$ such that $\Gcyc(x)$ has an
$st$-path if and only if $G(x)$ has a cycle. We note that there is a cycle in
$G(x)$ if and only if an edge $\{u,v\}$ is present in $G(x)$ and there is a
path from $u$ to $v$ in $G(x)$ that does not use the edge $\{u,v\}.$ Thus, our
reduction tests every edge in $G$ to determine whether these two conditions
are satisfied. We use the encoding of logical \textsc{and} and \textsc{or} into
$st$-connectivity using serial and parallel composition, as described in \cref{section:background} and in Refs.
\cite{Nisan:1995:SLC:225058.225101,Jeffery2017algorithmsgraph}.

We now describe how to build up $\Gcyc$ from simpler graphs. For an edge
$\ell=\{u,v\}\in E(G)$ let $G^-_{\ell}$ be the graph that is the same as $G$,
except with the edge $\ell$ removed, and the vertex $u$ labeled as $s$, and
the vertex $v$ labeled as $t$. (The choice of which endpoint of $\ell$ is $s$ and which is $t$ is arbitrary - either choice is acceptable). Each edge in $G^-_{\ell}$ is
associated with the same literal of $x$ as the corresponding edge in $G$. Thus
there is an $st$-path in $G^-_\ell(x)$ if and only if there is a path between
$u$ and $v$ in $G(x)$ that does not go through $\{u,v\}$.

Next, for an edge $\ell\in E(G)$ let $G^{1}_{\ell}$ be the graph
with exactly two vertices labeled $s$ and $t$, and one edge between them. The
one edge in $G^{1}_{\ell}$ is associated with the same literal bit of $x$ as
$\ell$. Thus there is an $st$-path in $G^{1}_{\ell}(x)$ if and only if $\ell\in E(G(x))$.

Next, we create the graph $G_\ell$ by connecting $G^{1}_{\ell}$ and $G^-_\ell$ in series, while
leaving the associations between edges and literals the same. Then because connecting $st$-connectivity graphs in series is equivalent to logical \textsc{and},
there is an $st$-path through $G_\ell(x)$ if and only if there is a cycle in $G(x)$
passing through $\ell$. 

Finally, we create the graph $\Gcyc$ by connecting all of the graphs $G_\ell$
(for each $\ell\in E(G)$) in parallel, again retaining the association between
edges and literals. Since attaching graphs in parallel is equivalent to
logical \textsc{or}, $\Gcyc(x)$ has an $st$-path if and only if there is a
cycle through some edge of $G(x)$. See \cref{fig:Gcyc(x)} for an example of the construction of $\Gcyc.$

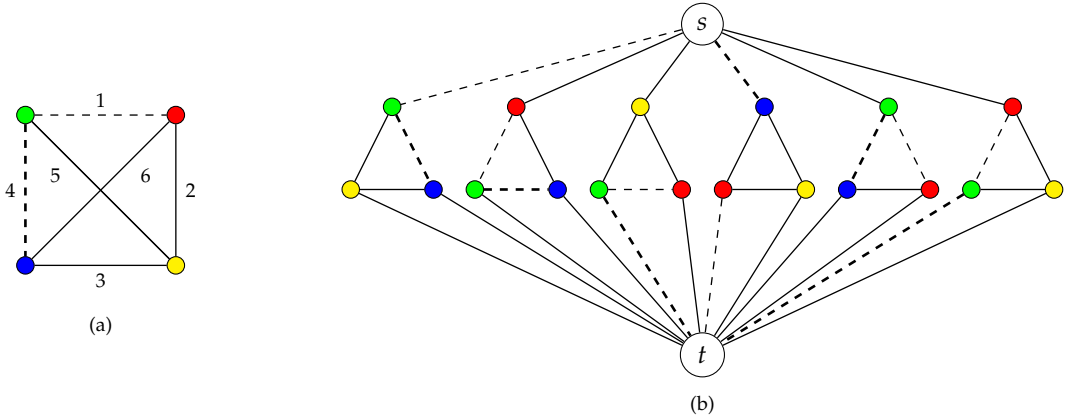
\begin{figure}[ht]
    \centering
    \begin{tikzpicture}[scale = 2]
\node at (0,0) {\begin{tikzpicture}[scale = 2]
            \tikzstyle{vertex}=[draw,circle]
            \tikzstyle{label}=[circle,scale = .75]
            \tikzstyle{edge}=[draw,line width = .5pt,-,black!100]
            \tikzstyle{one}=[draw,line width = .5pt,dashed,black!100]
            \tikzstyle{four}=[draw,line width = 1pt,dashed,black!100]
           	\tikzstyle{vu}=[draw,circle,scale=.7,fill=green]
           	\tikzstyle{vv}=[draw,circle,scale=.7,fill=red]
           	\tikzstyle{vx}=[draw,circle,scale=.7,fill=blue]
           	\tikzstyle{vw}=[draw,circle,scale=.7,fill=yellow]
            \node[vu] (u) at (0,1) {};
            \node[vv] (v) at (1,1) {};
            \node[vw] (w) at (1,0) {};
            \node[vx] (x) at (0,0) {};
            \node[label] (one) at (.5,1.1) {$1$};
            \node[label] (two) at (1.1,.5) {$2$};
            \node[label] (three) at (.5,-.1) {$3$};
            \node[label] (four) at (-.1,.5) {$4$};
            \node[label] (five) at (.2,.6) {$5$};
            \node[label] (six) at (.8,.6) {$6$};
            \draw[one] (u) -- (v);
            \draw[edge] (u) -- (w);
            \draw[edge] (u) -- (w);
            \draw[four] (u) -- (x);
            \draw[edge] (v) -- (x);
            \draw[edge] (v) -- (w);
            \draw[edge] (x) -- (w);
            \node[label] (a) at (.5,-.4) {(a)};
        \end{tikzpicture}};
       \node at (4,0) {
        \begin{tikzpicture}[scale = 2.2]
            \tikzstyle{label}=[circle,scale = .75]
            \tikzstyle{vertex}=[draw,circle,scale=.7]
           	\tikzstyle{vu}=[draw,circle,scale=.7,fill=green]
           	\tikzstyle{vv}=[draw,circle,scale=.7,fill=red]
           	\tikzstyle{vx}=[draw,circle,scale=.7,fill=blue]
           	\tikzstyle{vw}=[draw,circle,scale=.7,fill=yellow]
            \tikzstyle{vertex2}=[draw,circle,scale = 1]
            \tikzstyle{edge}=[draw,line width = .5pt,-,black!100]
            \tikzstyle{one}=[draw,line width = .5pt,dashed,black!100]
            \tikzstyle{four}=[draw,line width = 1pt,dashed,black!100]
            \node[vertex2] (s) at (2.125, 1.5) {$s$};
            \node[vertex2] (t) at (2.125, -.5) {$t$};
            \node[vu] (u1) at (.25,1) {};
            \node[vw] (w1) at (0,.5) {};
            \node[vx] (x1) at (.5,.5) {};
            \draw[edge] (u1) -- (w1);
            \draw[four] (u1) -- (x1);
            \draw[edge] (x1) -- (w1);
            \node[vv] (v2) at (1,1) {};
            \node[vu] (u2) at (.75,.5) {};
            \node[vx] (x2) at (1.25,.5) {};
            \draw[one] (v2) -- (u2);
            \draw[edge] (v2) -- (x2);
            \draw[four] (u2) -- (x2);
            \node[vw] (w3) at (1.75,1) {};
            \node[vu] (u3) at (1.5,.5) {};
            \node[vv] (v3) at (2,.5) {};
            \draw[edge] (w3) -- (u3);
            \draw[edge] (w3) -- (v3);
            \draw[one] (u3) -- (v3);
            \node[vx] (x4) at (2.5,1) {};
            \node[vv] (v4) at (2.25,.5) {};
            \node[vw] (w4) at (2.75,.5) {};
            \draw[edge] (x4) -- (v4);
            \draw[edge] (x4) -- (w4);
            \draw[edge] (v4) -- (w4);
            \node[vu] (u5) at (3.25,1) {};
            \node[vx] (x5) at (3,.5) {};
            \node[vv] (v5) at (3.5,.5) {};
            \draw[four] (u5) -- (x5);
            \draw[one] (u5) -- (v5);
            \draw[edge] (x5) -- (v5);
            \node[vv] (v6) at (4,1) {};
            \node[vu] (u6) at (3.75,.5) {};
            \node[vw] (w6) at (4.25,.5) {};
            \draw[one] (v6) -- (u6);
            \draw[edge] (v6) -- (w6);
            \draw[edge] (w6) -- (u6);
            \draw[one] (s) -- (u1);
            \draw[edge] (s) -- (v2);
            \draw[edge] (s) -- (w3);
            \draw[four] (s) -- (x4);
            \draw[edge] (s) -- (u5);
            \draw[edge] (s) -- (v6);
            \draw[edge] (w1) -- (t);
            \draw[edge] (x1) -- (t);
            \draw[edge] (u2) -- (t);
            \draw[edge] (x2) -- (t);
            \draw[four] (u3) -- (t);
            \draw[edge] (v3) -- (t);
            \draw[one] (v4) -- (t);
            \draw[edge] (w4) -- (t);
            \draw[edge] (x5) -- (t);
            \draw[edge] (v5) -- (t);
            \draw[four] (u6) -- (t);
            \draw[edge] (w6) -- (t);
             \node[label] (b) at (2.125,-.8) {(b)};
        \end{tikzpicture}};
        \end{tikzpicture}
    \caption{(a) A graph $G(x)$, where edges 2, 3, 5, and 6 are present (solid lines indicate the presence of an edge, dashed lines indicate the absence of an edge). (b) The graph $\Gcyc(x)$ that $G(x)$ produces. There is a cycle involving the edges $2$, $3$, and $6$ in $G(x)$, and thus there are paths from $s$ to $t$ in the subgraphs $G_2$, $G_3$, and $G_6$ in $\Gcyc(x)$.} \label{fig:Gcyc(x)}
\end{figure}

In order to use \cref{thm:stconn} to determine the query complexity of deciding $st$-connectivity on $\Gcyc(x)$, we next analyze the effective resistance (respectively capacitance) of $\Gcyc(x)$ in the presence (resp. absence) of cycles in $G(x)$. We first show the following relationship between effective resistance and circuit rank:

\begin{lemma}\label{lem:resCircRank}
Let $r$ be the circuit rank of $G(x)$. Then
\begin{align}\label{equation:reductRes}
R_{s,t}(\Gcyc(x)) =\frac{1}{r}.
\end{align}
\end{lemma}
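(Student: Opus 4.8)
The plan is to collapse the series--parallel structure of $\Gcyc$ using the listed effective-resistance properties, and then to recognize the resulting sum as a spanning-tree count. Since $\Gcyc(x)$ is the parallel composition of the graphs $G_\ell(x)$ over all $\ell\in E(G)$, \cref{prop:ERparallel} gives $R_{s,t}(\Gcyc(x))^{-1}=\sum_{\ell\in E(G)}R_{s,t}(G_\ell(x))^{-1}$. Each $G_\ell$ is $G^1_\ell$ in series with $G^-_\ell$, so by \cref{prop:ERseries} we have $R_{s,t}(G_\ell(x))=R_{s,t}(G^1_\ell(x))+R_{s,t}(G^-_\ell(x))$. For an edge $\ell=\{u,v\}$ absent from $G(x)$, \cref{prop:ERinf} makes the first term infinite and the whole term drops out of the parallel sum; for an edge present in $G(x)$, \cref{prop:oneEdge} makes the first term equal to $1$. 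Writing $R^{(\ell)}:=R_{s,t}(G^-_\ell(x))$ for the effective resistance between the endpoints of $\ell$ in $G(x)$ after deleting $\ell$, I am left with
\[
R_{s,t}(\Gcyc(x))^{-1}=\sum_{\ell\in E(G(x))}\frac{1}{1+R^{(\ell)}}.
\]

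Next I would convert each summand into a spanning-tree ratio. In $G(x)$ the edge $\ell$ is exactly one unit-resistance branch in parallel with $G^-_\ell(x)$, so the effective resistance across $\ell$ in $G(x)$ itself is $R_{u,v}(G(x))=R^{(\ell)}/(1+R^{(\ell)})$, whence $\tfrac{1}{1+R^{(\ell)}}=1-R_{u,v}(G(x))$. Here I would invoke the classical identity that, for unit-weight graphs, the effective resistance across an edge equals the fraction of spanning trees that use it: $R_{u,v}(C)=t_\ell(C)/t(C)$, where $C$ is the connected component of $G(x)$ containing $\ell$ (equivalently, this follows from $R_{u,v}(C)$ being the ratio of the number of spanning $2$-forests separating $u$ and $v$ to $t(C)$, together with the deletion--contraction bijection between such $2$-forests and spanning trees of $C$ through $\ell$). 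Thus each present edge contributes $1-t_\ell(C)/t(C)$; note that bridges, for which $t_\ell(C)=t(C)$, contribute $0$, matching the fact that an edge not lying on a cycle gives $R^{(\ell)}=\infty$.

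Finally, I would carry out the counting. Grouping the sum by connected components $C$ of $G(x)$ (with $n_C$ vertices and $m_C$ edges), the double-counting identity $\sum_{\ell\in E(C)}t_\ell(C)=(n_C-1)\,t(C)$, which holds because each spanning tree of $C$ has exactly $n_C-1$ edges, gives $\sum_{\ell\in E(C)}\bigl(1-t_\ell(C)/t(C)\bigr)=m_C-(n_C-1)$. Summing over all components yields $R_{s,t}(\Gcyc(x))^{-1}=\sum_C\bigl(m_C-n_C+1\bigr)=m-n+\kappa=r$, which is the circuit rank by \cref{def:circRank}, and hence $R_{s,t}(\Gcyc(x))=1/r$.

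I expect the series--parallel collapse and the final double counting to be routine given the stated properties; the one genuinely non-formal step is the identity $R_{u,v}(C)=t_\ell(C)/t(C)$, which is where external graph theory (Kirchhoff's spanning-tree theorem, rather than the formal resistance properties) must enter. The cleanest route is either to cite this directly or to prove it through the $2$-forest formula and deletion--contraction, which are equivalent. I would also take care with degenerate cases: isolated vertices contribute $m_C-n_C+1=0$ and are harmless, and the trivial $r=0$ forest case is consistent since then there is no $st$-path and $R_{s,t}(\Gcyc(x))=\infty=1/0$.
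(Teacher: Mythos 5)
Your proposal is correct and follows essentially the same route as the paper: the same series--parallel collapse to $R_{s,t}(\Gcyc(x))^{-1}=\sum_{\ell\in E(G(x))}\bigl(1-R_{u,v}(G(x))\bigr)$ (the paper relegates this calculation to \cref{app:RGcyc}), the same spanning-tree identity $R_{u,v}=t_\ell/t$ (\cref{thm:resSpanTree}), and the same double-counting of edges over spanning trees per connected component to reach $m-n+\kappa=r$. Your explicit handling of absent edges, bridges, and the degenerate $r=0$ case is slightly more careful than the paper's write-up, but it is not a different argument.
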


The proof of \cref{lem:resCircRank} uses the following result from Ref. \cite{biggs1997} relating effective resistance and spanning trees:
\begin{theorem}\label{thm:resSpanTree}\cite{biggs1997}
Let $\{u,v\}=\ell$ be an edge in a connected graph $G$. Then the effective resistance between vertices $u$ and $v$
is equal to the number of spanning trees that include an edge $\ell$, divided by the total number of spanning trees:
\begin{align}
R_{u,v}(G) &=\frac{t_\ell(G)}{t(G)}.
\end{align}
\end{theorem}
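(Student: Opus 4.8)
The plan is to prove \cref{thm:resSpanTree} by combining the Matrix--Tree Theorem with a bijection between spanning trees and two-component spanning forests. Throughout, let $L = D - A$ denote the combinatorial Laplacian of $G$ (with $D$ the diagonal degree matrix and $A$ the adjacency matrix), and for a vertex $w$ write $e_w$ for the corresponding standard basis vector. First I would recall the determinant formula for effective resistance from electrical network theory. Grounding vertex $v$ (fixing its potential to $0$) and driving a unit current into $u$, the potentials on the remaining vertices form the vector $L_v^{-1} e_u$, where $L_v$ is the reduced Laplacian obtained by deleting the row and column indexed by $v$; this matrix is invertible precisely because $G$ is connected. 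Since the injected current is $1$, the effective resistance equals the resulting potential at $u$, so $R_{u,v}(G) = (L_v^{-1})_{uu}$. Applying Cramer's rule to this diagonal entry (whose cofactor has sign $+1$ since the same index is deleted from rows and columns) gives
\begin{align}
R_{u,v}(G) = \frac{\det L_{u,v}}{\det L_v},
\end{align}
where $L_{u,v}$ denotes $L$ with the rows and columns indexed by \emph{both} $u$ and $v$ deleted.

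Next I would evaluate the two determinants combinatorially. The denominator is immediate from the Matrix--Tree Theorem: every first cofactor of $L$ equals $t(G)$, and in particular $\det L_v = t(G)$. For the numerator I would invoke the all-minors (generalized) Matrix--Tree Theorem, which states that the principal minor of $L$ obtained by deleting the rows and columns indexed by a set $S$ counts the spanning forests of $G$ having exactly $|S|$ trees, each containing exactly one vertex of $S$. Taking $S = \{u,v\}$, this identifies $\det L_{u,v}$ with the number of spanning forests of $G$ consisting of exactly two trees, one containing $u$ and the other containing $v$.

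Finally I would match this forest count to $t_\ell(G)$ via a bijection. Given a spanning tree $T$ containing the edge $\ell = \{u,v\}$, deleting $\ell$ splits $T$ into exactly two subtrees, one containing $u$ and one containing $v$, which is a two-component spanning forest separating $u$ and $v$. Conversely, adding $\ell$ to any such two-component forest reconnects it into a spanning tree that contains $\ell$, and these two maps are mutually inverse. Hence the numerator equals $t_\ell(G)$, and dividing by $\det L_v = t(G)$ yields the claimed identity.

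The main obstacle is the middle step: correctly invoking the all-minors version of the Matrix--Tree Theorem to interpret the double cofactor $\det L_{u,v}$ as an (unsigned) count of two-component spanning forests separating $u$ and $v$, rather than the more familiar single-deletion cofactor. One must check that the principal-minor structure makes the relevant signs uniformly positive so that the minor is a genuine count. A route that avoids the generalized theorem entirely is the classical electrical argument of Kirchhoff: because $\ell$ is a single unit-resistance edge joining $u$ and $v$, the current it carries when a unit current is driven from $u$ to $v$ equals the voltage drop $R_{u,v}(G)$, and Kirchhoff's spanning-tree formula for edge currents identifies this current with the probability $t_\ell(G)/t(G)$ that a uniformly random spanning tree uses $\ell$; I would present the Matrix--Tree argument as primary and mention this as an alternative.
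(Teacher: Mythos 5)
Your proposal cannot be compared against a proof in the paper, because the paper gives none: \cref{thm:resSpanTree} is imported verbatim from \cite{biggs1997} and used as a black box in the proof of \cref{lem:resCircRank}. Judged on its own, your argument is correct and complete. The chain $R_{u,v}(G) = (L_v^{-1})_{uu} = \det L_{u,v}/\det L_v$ is the standard grounded-Laplacian formulation (with $L_v$ invertible exactly because $G$ is connected), the denominator $\det L_v = t(G)$ is the Matrix--Tree Theorem, and the numerator is handled by the all-minors theorem for the principal minor with $S=\{u,v\}$, where---as you rightly flag---the principal-minor structure is what guarantees all signs are $+1$ so that $\det L_{u,v}$ is a genuine count of two-component spanning forests separating $u$ from $v$. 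The delete-$\ell$/add-$\ell$ bijection between such forests and spanning trees containing $\ell$ is where the hypothesis $\ell=\{u,v\}\in E(G)$ enters, and you use it only there; note that the determinant identity itself holds for arbitrary vertex pairs, so your proof cleanly isolates the role of adjacency. Your alternative Kirchhoff route is also sound: when unit current is driven from $u$ to $v$, the current through the unit-resistance edge $\ell$ equals the voltage drop $R_{u,v}(G)$, and since the unique $u$--$v$ path in any spanning tree containing $\ell$ is $\ell$ itself, Kirchhoff's edge-current formula gives exactly $t_\ell(G)/t(G)$ (the probability that a uniform random spanning tree uses $\ell$); this is essentially the argument in the cited literature, so either route would serve to make the paper self-contained.
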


\begin{proof}[Proof of \cref{lem:resCircRank}]
Using the rules that the effective resistance of graphs in series adds, (\cref{prop:ERseries}), and the inverse effective resistance of graphs in parallel adds, (\cref{prop:ERparallel}), we have: 
\begin{align}\label{equation:reductResorig}
R_{s,t}(\Gcyc(x)) =\left(\sum_{(u,v) \in E(G(x))} 1 - R_{u,v}(G(x))\right)^{-1}.
\end{align}
(We include this relatively straightforward calculation in \cref{app:RGcyc}.)

We next relate the righthand side of \cref{equation:reductResorig} to the circuit rank.
 Let $G(x)$ be a graph with $\kappa$ connected components. Let $g_i(x)$ be a subgraph
consisting of the $i\tth$ connected component of $G$, with $n_i$ vertices. We
count the number of times edges are used in all spanning trees of $g_i(x)$ in two
ways. First, we multiply the number of spanning trees by the number of edges
in each spanning tree. Second, for each edge we add the number of spanning
trees that include that edge. Setting these two terms equal, we have,
\begin{align}
t(g_i(x))(n_i-1) = \sum_{\ell \in E(g_i(x))} t_\ell(g_i(x)).
\end{align}

Rearranging, and using \cref{thm:resSpanTree} we have
\begin{align}\label{equation:connRes}
n_i-1 &= \sum_{\ell \in E(g_i(x))} \frac{t_\ell(g_i(x))}{t(g_i(x))} = 
\sum_{\{u,v\} \in E(g_i(x))} R_{u,v}(g_i(x)),
\end{align}
where if the sum has no terms (i.e. $E(g_i(x))=\varnothing$), we define it to be zero.

Summing over all $\kappa$ components of $G(x)$, we have
\begin{align}\label{eq:sumER}
n - \kappa=\sum_{\{u,v\} \in E(G)} R_{u,v}(G(x)),
\end{align}
Finally, using the fact that 
\begin{align}
\sum_{\{u,v\} \in E(G(x))}1=m,
\end{align}
where $m$ is the number of edges in $G(x)$, and combining with \cref{eq:sumER}, and \cref{def:circRank}, we have
\begin{align}\label{eq:r}
\sum_{\{u,v\}\in E(G(x))} 1- R_{u,v}(G(x)) = r.
\end{align}
Finally \cref{eq:r} and \cref{equation:reductResorig} give the result.

\end{proof}

We next analyze the effective capacitance of $\Gcyc(x)$ in the case of no cycles in $G(x)$: 
\begin{lemma}\label{lemma:effCapCycle}
If $G$ is the complete graph on $n$ vertices and $G(x)$ has no cycles and at most $\mu$ edges, then,
$C_{s,t}(\Gcyc(x))=O(n\mu^2)$.
\end{lemma}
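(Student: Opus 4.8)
The plan is to mirror the series/parallel analysis used for the effective resistance in \cref{lem:resCircRank}, but now invoking the capacitance rules. Since $\Gcyc$ is the parallel composition of the graphs $G_\ell$ over all $\ell\in E(G)$, the parallel rule \cref{prop:ECparallel} gives $C_{s,t}(\Gcyc(x))=\sum_{\ell\in E(G)}C_{s,t}(G_\ell(x))$, and since each $G_\ell$ is the series composition of the single-edge graph $G^{1}_{\ell}$ and the deleted-edge graph $G^-_{\ell}$, the series rule \cref{prop:ECseries} gives $C_{s,t}(G_\ell(x))^{-1}=C_{s,t}(G^{1}_{\ell}(x))^{-1}+C_{s,t}(G^-_{\ell}(x))^{-1}$, hence $C_{s,t}(G_\ell(x))\le\min\{C_{s,t}(G^{1}_{\ell}(x)),C_{s,t}(G^-_{\ell}(x))\}$. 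Because $G(x)$ is a forest, for each $\ell$ at least one factor is disconnected, so every term is finite. I would then partition $E(G)$ into three classes and bound the sum class by class: edges present in $G(x)$, absent edges whose endpoints are connected in $G(x)$, and absent edges whose endpoints are disconnected.

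The core estimate is for the present edges, which I expect to be the source of the $\mu^2$ factor. If $\ell=\{u,v\}$ is present then $G^{1}_{\ell}(x)$ is connected, so $C_{s,t}(G^{1}_{\ell}(x))=\infty$ by \cref{prop:ECinf} and thus $C_{s,t}(G_\ell(x))=C_{s,t}(G^-_{\ell}(x))$. Since $G(x)$ is a forest, deleting $\ell$ splits its tree into the components $A_\ell\ni u$ and $B_\ell\ni v$; isolating whichever of $A_\ell,B_\ell$ is smaller from the rest of the complete graph is an $s$-$t$ cut of size $\min(|A_\ell|,|B_\ell|)\,(n-\min(|A_\ell|,|B_\ell|))$, so the cut bound \cref{prop:ECcut} gives $C_{s,t}(G^-_{\ell}(x))\le n\min(|A_\ell|,|B_\ell|)$. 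Using $\min(|A_\ell|,|B_\ell|)\le(\mu+1)/2$ together with the fact that there are at most $\mu$ present edges, the present-edge contribution is $\sum_{\ell\text{ present}}C_{s,t}(G_\ell(x))\le n\sum_{\ell\text{ present}}\min(|A_\ell|,|B_\ell|)=O(n\mu^2)$.

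For absent edges $\ell=\{u,v\}$ with $u,v$ in the same component of $G(x)$, the factor $G^-_{\ell}(x)$ is connected while $G^{1}_{\ell}(x)$ is a single missing edge, so $C_{s,t}(G_\ell(x))=C_{s,t}(G^{1}_{\ell}(x))=1$ by \cref{prop:noEdge}; since these correspond to vertex pairs inside the tree components, there are only $O(\mu^2)$ of them, contributing $O(\mu^2)$. This leaves the final and, I expect, hardest class: absent edges whose endpoints lie in different components. Here \cref{prop:noEdge} only yields $C_{s,t}(G_\ell(x))\le C_{s,t}(G^{1}_{\ell}(x))=1$, and there can be as many as $\binom{n}{2}$ such edges, so the naive bound contributes $O(n^2)$. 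The main obstacle is therefore to control this disconnected-endpoint class and absorb it into $O(n\mu^2)$; I would attack it by estimating $C_{s,t}(G_\ell(x))=C_{s,t}(G^-_\ell(x))/(1+C_{s,t}(G^-_\ell(x)))$ more sharply, or by observing that in the regime where the claimed bound is tight, e.g. $\mu=\Omega(\sqrt{n})$ which includes the worst case $\mu=n-1$ relevant to \cref{thm:cycleDetection}, this $O(n^2)$ term is already dominated by $O(n\mu^2)$.
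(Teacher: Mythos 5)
Your decomposition is the same as the paper's: the published proof also writes $C_{s,t}(\Gcyc(x))$ as the sum over the parallel subgraphs $G_\ell$, uses \cref{prop:ECinf} and the series rule to get $C_{s,t}(G_\ell(x))=C_{s,t}(G^-_\ell(x))$ for present edges and $C_{s,t}(G_\ell(x))=O(1)$ for absent edges via \cref{prop:noEdge}, and bounds the present-edge terms by $O(n\mu)$ via a cut argument. Your version of that cut argument is more explicit and is correct, with one caveat you should make explicit: \cref{prop:ECcut} as stated refers to cuts in $G$, but for the bound $n\min(|A_\ell|,|B_\ell|)$ (or the paper's $O(n\mu)$) to follow, the cut must not sever any edge of $G(x)$, i.e.\ the separating set must be a union of components of $G(x)\setminus\ell$; this is how the property must be read for either argument to go through, and your choice $S=A_\ell$ satisfies it.

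The obstacle you identify in the third class is genuine, and you should know that the paper does not overcome it either: its proof asserts there are ``$n-\mu$'' subgraphs with $\ell\notin E(G(x))$, but $G$ is complete, so there are $\binom{n}{2}-\mu$ of them, and the parallel sum only yields $C_{s,t}(\Gcyc(x))=O(n^2+n\mu^2)$. Your first proposed repair cannot work: for the empty input one computes $C_{s,t}(G^-_\ell(x))=1/R_{u,v}(K_n\setminus\ell)=(n-2)/2$, hence $C_{s,t}(G_\ell(x))=(1+\tfrac{2}{n-2})^{-1}=1-\tfrac{2}{n}$ for every one of the $\binom{n}{2}$ subgraphs, so $C_{s,t}(\Gcyc(x))=\Theta(n^2)$ while the claimed bound would be $O(n\mu^2)=0$. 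The lemma is therefore false as stated for $\mu=o(\sqrt{n})$, and your second observation is the correct resolution: the bound should be $O(n^2+n\mu^2)$, which equals $O(n\mu^2)$ precisely when $\mu=\Omega(\sqrt{n})$ (in particular in the worst case $\mu=n-1$, so the optimal $O(n^{3/2})$ consequence of \cref{thm:cycleDetection} survives, but the refined bound there should read $O\bigl((n+\mu\sqrt{n})/\sqrt{r}\bigr)$).
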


\begin{proof}
We first analyze the effective capacitance of the subgraph $G_\ell(x)$ in two cases, when $\ell\in E(G(x))$ and when $\ell\notin E(G(x)).$

When $\ell\in E(G(x))$, then using \cref{prop:ECinf,prop:ECseries}, we have
$C_{s,t}(G_\ell(x))=C_{s,t}(G^-_\ell(x))$. Then using \cref{prop:ECcut}, we have that $C_{s,t}(G^-_\ell(x))$ is less than the size of the cut between vertices $s$ and $t$ in $G^-_\ell(x)$. Since there are $\mu$ edges and $n$ vertices in $G(x)$,
this quantity is bounded by $O(n\mu)$. (The worst case is when there are $\Omega(\mu)$ vertices connected to $s$, e.g.)

When $\ell\notin E(G(x))$, then using \cref{prop:noEdge,prop:ECseries},
$C_{s,t}(G_\ell(x))=O(1)$.

Since there are $n-\mu$ graphs $G_\ell(x)$ with $\ell\notin E(G(x))$ and $\mu$ graphs $G_\ell(x)$ with $\ell\in E(G(x))$, using \cref{prop:ECparallel} for graphs connected in parallel, we have that $C_{s,t}(\Gcyc(x))=O(\mu^2n).$
\end{proof}

In order to prove \cref{thm:circuitRankEst}, we
need to analyze $\widetilde{R}_-(x,\Gcyc)$:
\begin{lemma}\label{lemm:negWit}
For a graph $G$ with $m$ edges, let $X=\{x:G(x)\textrm{ contains a cycle and }
|E(G(x))|\leq \mu\}$ and let $\widetilde{R}_-=\max_{x\in
X}\widetilde{R}_-(x,\Gcyc)$. Then $\widetilde{R}_-=O(m\mu).$
\end{lemma}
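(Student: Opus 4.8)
The plan is to turn the minimization defining $\widetilde{R}_-(x,\Gcyc)$ into an upper bound by exhibiting one convenient admissible potential and controlling its energy. Recall that $\widetilde{R}_-(x,\Gcyc)$ is a minimum, over all extensions $\mathcal{V}\colon V(\Gcyc)\to\mathbb{R}$ of $\mathcal{V}_x$ (i.e.\ $\mathcal{V}=\mathcal{V}_x$ on $V_x$), of $\sum_{\{a,b\}\in E(\Gcyc)}(\mathcal{V}(a)-\mathcal{V}(b))^2$; hence any single admissible $\mathcal{V}$ furnishes an upper bound. I would fix $x\in X$ and take the extension $\mathcal{V}=\mathcal{V}_x$ on $V_x$ and $\mathcal{V}\equiv 0$ on $V(\Gcyc)\setminus V_x$. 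Since $\mathcal{V}_x$ is the energy-minimizing potential with boundary values $\mathcal{V}_x(s)=1$ and $\mathcal{V}_x(t)=0$, the maximum principle gives $\mathcal{V}_x\in[0,1]$, so under this assignment every edge of $\Gcyc$ contributes at most $1$.

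The next step is to exploit the parallel structure of $\Gcyc$. The edge set of $\Gcyc$ is the disjoint union of the edge sets of the $m$ subgraphs $G_\ell$ (which pairwise share only $s$ and $t$), so the energy splits as a sum of per-subgraph energies, and I would bound each subgraph according to whether $\ell\in E(G(x))$. Each $G_\ell$ has exactly $m$ edges, being a copy of $G-\ell$ (with $m-1$ edges) in series with the single edge $G^1_\ell$; combined with $\mathcal{V}\in[0,1]$, this gives the crude bound of $m$ for the contribution of any single $G_\ell$. The crucial counting observation is that the number of indices $\ell$ with $\ell\in E(G(x))$ equals $|E(G(x))|\le\mu$, so the \emph{present} subgraphs contribute at most $\mu\cdot m$ in total.

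For the \emph{absent} subgraphs ($\ell\notin E(G(x))$) I would show each contributes only $O(1)$. If $\ell\notin E(G(x))$ then the single edge $G^1_\ell$ is absent, so $s$ has no incident edge inside $G_\ell(x)$ and is isolated there. Consequently every vertex of $G_\ell$ other than $s$ is either unreachable from both terminals (hence assigned $0$ by our extension) or reachable from $t$ but not from $s$; the latter form a component attached to the active network only at $t$, carry no net current, and so $\mathcal{V}_x$ is forced to equal $\mathcal{V}_x(t)=0$ there. Thus every vertex of $G_\ell$ except $s$ sits at potential $0$, and since $s$ has a unique incident edge in the base graph $G_\ell$ (namely the edge $\{s,c_\ell\}$ of $G^1_\ell$, where $c_\ell$ is the vertex joining $G^1_\ell$ and $G^-_\ell$ in series), only that one edge contributes, giving energy at most $1$ for each of the at most $m$ absent subgraphs. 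Summing the two cases yields
\begin{align}
\widetilde{R}_-(x,\Gcyc)\;\le\;\mu\, m + m\;=\;O(m\mu)
\end{align}
for every $x\in X$ (using $\mu\ge 1$, which holds since $G(x)$ contains a cycle), and taking the maximum over $x\in X$ gives the claim.

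I expect the main obstacle to be exactly the absent-subgraph case: the bound $O(1)$ there is what prevents the naive estimate from blowing up to $O(m^2)$, and it hinges on two facts that must be verified carefully—that the forced values $\mathcal{V}_x$ on the $t$-reachable dangling part are all $0$ (the constant-potential / no-net-current property of a pendant component hanging off a single terminal), and that $s$ has degree one in the base graph $G_\ell$, so only a single edge can carry the jump from $1$ at $s$ to $0$ elsewhere. The present-subgraph case, by contrast, is routine once one notes that at most $\mu$ of the $m$ subgraphs can have their defining edge present.
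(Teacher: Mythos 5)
Your proof is correct and follows essentially the same route as the paper's: extend $\mathcal{V}_x$ by zero, use the maximum principle to bound each edge's contribution by $1$, and split the $m$ parallel subgraphs $G_\ell$ into the at most $\mu$ ``present'' ones (contributing at most $m$ each) and the ``absent'' ones (contributing $O(1)$ each). If anything, your treatment of the absent case is more careful than the paper's, which simply asserts that the non-terminal vertices of those subgraphs lie outside $V_x$, whereas you correctly observe that vertices in the $t$-attached pendant component may lie in $V_x$ and argue that the minimizer forces them to potential $0$ anyway.
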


\begin{proof}
Looking at \cref{eq:NegAppWitdef}, for $\ell\notin E(G(x))$, we have that all $v\in V(G_\ell)$ (except $s$ and $t$) are not in $V_x$, so any choice of $\sop V$ on these vertices will give an upper bound on the minimizing map. We choose $\sop V(v)=0$ for these vertices to give us our bound, which contributes $1$ to the sum for each such subgraph. Thus edges in these subgraphs contribute $m-\mu$ to the total.

For $\ell\in E(G(x))$, for vertices in these subgraphs which are also part of $V_x$, they will get mapped by $\sop V_x$ to values between $0$ and $1$ inclusive (since $\sop V_x$ can be seen as the voltage induced at each point by a unit potential difference between $s$ and $t$). If we choose the remaining vertices to also get mapped to values between $0$ and $1$ by $\sop V$, we will again have an upper bound on the minimum. Then $\left(\sop V(u)-\sop V(v)\right)^2\leq 1$ across all edges in these subgraphs. Since there are $m$ edges in each subgraph, and $\mu$ such subgraphs, edges in these subgraphs contributes $m\mu$ to the total.

Combining the two terms, we have that $\widetilde{R}_-(x,\Gcyc)\leq m-\mu+m\mu=O(m\mu).$
\end{proof}

Now we can put these results together to prove \cref{thm:circuitRankEst}:

\begin{proof}[Proof of \cref{thm:circuitRankEst}]
Using \cref{thm:witSizeEst}, \cref{lem:resCircRank}, and \cref{lemm:negWit}, we can estimate one over the circuit rank (i.e. $1/r$) to multiplicative error $\epsilon$. That is, we get an estimate of $1/r$ within $(1\pm\epsilon)/r$. Now if we take the inverse of this estimate, we get an estimate of of $r$ within $r/(1\pm\epsilon)$. But since $\epsilon\ll1$, taking the Taylor expansion, we have $1/(1\pm\epsilon)\approx(1\pm\epsilon)$ to first order in $\epsilon$.
\end{proof}

\section{Algorithms for Detecting Odd Paths, Bipartiteness, and Even Cycles}\label{section:Odd/Even Paths}

In this section, we note that a slight variation on one of the
st-connectivity problems considered by Cade et al. in Ref. \cite{cade2016time}
can be used to detect odd paths and bipartiteness; furthermore the
bipartiteness testing algorithm we describe is optimal in query complexity,
and far simpler than the bipartiteness algorithm in \cite{cade2016time}.  We
then use a similar construction to create a reduction from even-length cycle
detection to $st$-connectivity.

All of these algorithms involve the bipartite double graph of the original
graph. Given a graph $G$ with vertices $u$ and $v$, let $\Gd{u,v}$ be the
bipartite double graph of $G$, with vertex $u_0$ relabeled as $s$, and vertex
$v_1$ relabeled as $t$. To define $\Gd{u,v}(x)$, if $\{x,y\}\in E(G)$ is
associated with a literal, then $\{x_0,y_1\}$ and $\{x_1,y_0\}$ in
$E(\Gd{u,v})$ are associated with the same literal.

We first show a reduction from detecting an odd-length path to $st$-connectivty on $K^G$:
\begin{lemma}\label{lemm:oddReduct}
Let $G$ be a graph with vertices $u$ and $v$. There is an odd-length path from
$u$ to $v$ in $G(x)$ if and only if there is an $st$-path in $\Gd{u,v}(x)$
\end{lemma}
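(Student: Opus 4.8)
The plan is to exploit the two-to-one projection $\pi\colon V(\Gd{u,v})\to V(G)$ given by $\pi(w_b)=w$ for $b\in\{0,1\}$, together with its lifts, and to read off the parity of a walk from the bipartite structure of the double graph. The central observation is that $\Gd{u,v}$ is bipartite with parts $\{w_0:w\in V(G)\}$ and $\{w_1:w\in V(G)\}$, since by construction every edge of $K^G$ joins a $0$-vertex to a $1$-vertex. Recall that in this paper an $st$-path is a series of edges (a walk), so the claim to establish is that there is an odd-length walk from $u$ to $v$ in $G(x)$ if and only if $s=u_0$ and $t=v_1$ are connected in $\Gd{u,v}(x)$.

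For the forward direction, I would start from an odd-length walk $u=w^{(0)},w^{(1)},\dots,w^{(2k+1)}=v$ in $G(x)$ and lift it to the sequence $u_0=w^{(0)}_0,\,w^{(1)}_1,\,w^{(2)}_0,\dots,\,w^{(2k+1)}_1=v_1$, where the subscript attached to the $i$-th vertex is $i \bmod 2$. I would then verify that each consecutive pair is an edge of $\Gd{u,v}(x)$: since $\{w^{(i)},w^{(i+1)}\}\in E(G(x))$, both $\{w^{(i)}_0,w^{(i+1)}_1\}$ and $\{w^{(i)}_1,w^{(i+1)}_0\}$ lie in $E(\Gd{u,v})$ and carry the same literal as $\{w^{(i)},w^{(i+1)}\}$, hence are present in $\Gd{u,v}(x)$; the lift simply uses whichever of the two edges matches the current parity. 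This yields a walk from $s$ to $t$, i.e.\ an $st$-path in $\Gd{u,v}(x)$.

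For the reverse direction, I would take an arbitrary $st$-walk in $\Gd{u,v}(x)$ and push it down under $\pi$. Each edge $\{a_0,b_1\}$ of $\Gd{u,v}(x)$ projects to $\{a,b\}\in E(G(x))$ (present because of the shared-literal association), so consecutive projected vertices are adjacent in $G(x)$ and we obtain a walk from $u$ to $v$. Its parity is forced by bipartiteness: any walk from the $0$-part vertex $u_0$ to the $1$-part vertex $v_1$ has odd length, and $\pi$ preserves length, so the projected $u$--$v$ walk is odd.

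The only real care is the parity bookkeeping in the lift --- tracking which copy each vertex lands in at each step, and confirming that the appropriately oriented created edge is always the one available --- together with keeping the literal associations aligned so that edge-presence in $G(x)$ is genuinely equivalent to edge-presence in $\Gd{u,v}(x)$. Everything else follows immediately from the bipartite parity observation, and I expect no genuine obstacle beyond this bookkeeping.
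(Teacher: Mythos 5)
Your proposal is correct and follows essentially the same route as the paper's proof: lift an odd walk to $\Gd{u,v}(x)$ by alternating subscripts $i \bmod 2$, and project back using the fact that every edge of the bipartite double joins a $0$-vertex to a $1$-vertex, so any $s$--$t$ walk has odd length. Your treatment is somewhat more explicit about the walk-versus-simple-path issue and the literal bookkeeping, but the underlying argument is identical.
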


\begin{proof} 
Suppose there is an odd-length path from $u$ to $v$ in $G(x)$. Let the path be
$u,\eta^1,\eta^2,\dots,\eta^k,v$ where $k$ is an even integer greater than or equal to 0.
Then there is a path $s,\eta^1_1,\eta^2_0,\dots,\eta^k_0,t$, in
$\Gd{u,v}(x)$ (where the path goes through $\eta^i_{i\mod 2}$). For the other direction,
if there is a path from $s$ to $t$ in $\Gd{u,v}(x)$, there is an odd-length path from $u$
to $v$ in $G$. Note that any path in $\Gd{u,v}(x)$ must alternate between $0$- and
$1$-labeled vertices. If there is a path that starts at a $0$-labeled vertex
and ends at a $1$-labeled vertex, it must be an odd-length path. Then there
must be the equivalent path in $G(x)$, but without the labeling. See \cref{fig:OddPath} for an example of this reduction.
\end{proof}

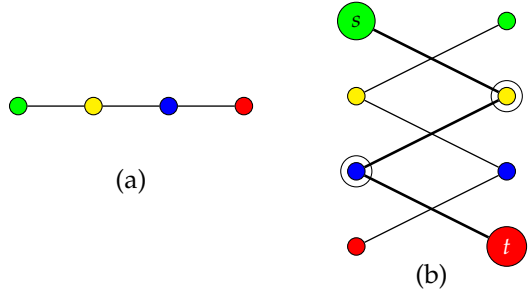
\begin{figure}[ht]
	\centering
    \begin{tikzpicture}[scale = 2]
	\node at (0,0)
		{\begin{tikzpicture}[scale = 2]
            \tikzstyle{edge}=[draw,line width = .5pt,-,black!100]
           	\tikzstyle{vert}=[draw,circle,scale=1]
           	\tikzstyle{vu}=[draw,circle,scale=.7,fill=green]
           	\tikzstyle{vv}=[draw,circle,scale=.7,fill=red]
           	\tikzstyle{vx}=[draw,circle,scale=.7,fill=blue]
           	\tikzstyle{vw}=[draw,circle,scale=.7,fill=yellow]
            \node[vu] (u) at (0,0) {};
            \node[vw] (w) at (.5,0) {};
            \node[vx] (x) at (1,0) {};
            \node[vv] (v) at (1.5,0) {};
            \draw[edge] (u) -- (w);
            \draw[edge] (w) -- (x);
            \draw[edge] (x) -- (v);
            \node[label] (a) at (.75,-.5) {(a)};
        \end{tikzpicture}};
    \node at (2,0) {
        \begin{tikzpicture}[scale = 2]
        	\tikzstyle{vert}=[draw,circle,scale=.9]
        	\tikzstyle{s}=[draw,rectangle,scale=1,fill=green]
        	\tikzstyle{t}=[draw,rectangle,scale=1,fill=red]
           	\tikzstyle{vu}=[draw,circle,scale=.7,fill=green]
           	\tikzstyle{vv}=[draw,circle,scale=.7,fill=red]
           	\tikzstyle{vx}=[draw,circle,scale=.7,fill=blue]
           	\tikzstyle{vw}=[draw,circle,scale=.7,fill=yellow]
            \tikzstyle{edge}=[draw,line width = .5pt,-,black!100]
            \tikzstyle{path}=[draw,line width = 1pt,-,black!100]
            \tikzstyle{halo}=[draw,circle,scale=1.25,fill=white]
            \node[halo] (h1) at (1,1) {};
            \node[halo] (h2) at (0,.5) {};
            \node[vert, fill=green, text=black] (u0) at (0, 1.5) {$s$};
            \node[vw] (w0) at (0, 1) {};
            \node[vx] (x0) at (0, .5) {};
            \node[vv] (v0) at (0, 0) {};
            \node[vu] (u1) at (1, 1.5) {};
            \node[vw] (w1) at (1, 1) {};
            \node[vx] (x1) at (1, .5) {};
            \node[vert, fill=red, text=white] (v1) at (1, 0) {$t$};
            \draw[path] (u0) -- (w1);
            \draw[edge] (w0) -- (x1);
            \draw[path] (x0) -- (v1);
            \draw[edge] (u1) -- (w0);
            \draw[path] (w1) -- (x0);
            \draw[edge] (x1) -- (v0);
            \node[label] (b) at (.5,-.2) {(b)};
        \end{tikzpicture}};
        \end{tikzpicture}

    \caption{(a) A simple example of a graph $G$ with an odd-length path between green and red vertices. (b) The bipartite double graph $K^G_{green, red}$ with a path between $s$ and $t$.} \label{fig:OddPath}
\end{figure}

\begin{theorem}\label{thm:bipartite}
Let $G$ be a graph with $n$ vertices and $m$ edges, with vertices $u$ and $v$.
Then there is a bounded-error quantum query algorithm that detects an odd
length path from $u$ to $v$ in $G$ using $O(\sqrt{nm})$ queries.
\end{theorem}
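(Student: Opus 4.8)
The plan is to combine the reduction of \cref{lemm:oddReduct} with the generic $st$-connectivity bound of \cref{thm:stconn}, thereby reducing the problem to estimating two graph quantities on the bipartite double graph $\Gd{u,v}$: the maximal effective resistance over inputs with an $st$-path, and the maximal effective capacitance over inputs without one.

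First I would invoke \cref{lemm:oddReduct} to replace the odd-path question on $G(x)$ with the $st$-connectivity question on $\Gd{u,v}(x)$: there is an odd-length path from $u$ to $v$ in $G(x)$ if and only if $s$ and $t$ are connected in $\Gd{u,v}(x)$. It then suffices to bound the query complexity of deciding $st$-connectivity on $\Gd{u,v}(x)$ via \cref{thm:stconn}, which requires bounding $R_{s,t}(\Gd{u,v}(x))$ on the connected (positive) instances and $C_{s,t}(\Gd{u,v}(x))$ on the disconnected (negative) instances.

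For the effective resistance, I would apply \cref{prop:length}: whenever $s$ and $t$ are connected in $\Gd{u,v}(x)$, the effective resistance is at most the length of a shortest $st$-path. Since $\Gd{u,v}$ has $2n$ vertices, any shortest (hence simple) $st$-path uses at most $2n-1$ edges, so $R_{s,t}(\Gd{u,v}(x))=O(n)$. For the effective capacitance, I would apply \cref{prop:ECcut}: when $s$ and $t$ are disconnected, the effective capacitance is at most the size of the smallest $st$-cut in the fixed ambient graph $\Gd{u,v}$. Because $\Gd{u,v}$ has $2m$ edges, this cut has size at most $2m$, giving $C_{s,t}(\Gd{u,v}(x))=O(m)$.

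Plugging these two bounds into \cref{equation:ourBound} of \cref{thm:stconn} yields query complexity $O\!\left(\sqrt{R_{s,t}\cdot C_{s,t}}\right)=O(\sqrt{nm})$, as claimed. I expect no deep obstacle; the only points needing care are tracking that the bipartite double graph has $2n$ vertices and $2m$ edges (so the path-length and cut bounds inherit harmless factors of two that vanish into the $O(\cdot)$), and keeping straight that \cref{prop:length} is applied to the $x$-dependent subgraph while \cref{prop:ECcut} is stated in terms of the fixed graph $\Gd{u,v}$.
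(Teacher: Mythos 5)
Your proposal is correct and follows essentially the same route as the paper: reduce via \cref{lemm:oddReduct} to $st$-connectivity on $\Gd{u,v}$, bound the effective resistance by $O(n)$ using \cref{prop:length} and the effective capacitance by $O(m)$ using \cref{prop:ECcut}, then apply \cref{thm:stconn}. Your explicit accounting of the $2n$ vertices and $2m$ edges of the bipartite double graph is a slightly more careful rendering of the same argument.
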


\begin{proof}
Using \cref{lemm:oddReduct} we reduce the problem to $st$-connectivity on
$\Gd{u,v}$. Then using \cref{thm:stconn}, we need to bound the largest
effective resistance and effective capacitance of $\Gd{u,v}(x)$ for any string $x$. The longest
possible path from $s$ to $t$ in $\Gd{u,v}(x)$ is $O(n)$ so by \cref{prop:length},
$R_{s,t}(\Gd{u,v}(x))=O(n).$ The longest possible cut between $s$ and $t$
is $O(m)$, so by \cref{prop:ECcut}, $C_{s,t}(\Gd{s,t}(x))=O(m).$ This gives the
claimed query complexity.
\end{proof}


Note $u_0$ is connected to $u_1$ in $K^G(x)$ if and only if there is
an odd-length path from $u$ to itself in $G(x),$ where this path is allowed to
double back on itself, as in \cref{fig:OddCycle}. This odd-length path in turn occurs if and only if the
connected component of $G(x)$ that includes $u$ is not bipartite (has an odd
cycle)! Thus if we are promised that $G(x)$ is connected, we can pick any
vertex in $G$, run the algorithm of \cref{thm:bipartite} on $\Gd{u,u}(x)$, and
determine if the graph is bipartite, which requires $O(\sqrt{nm})$ queries.

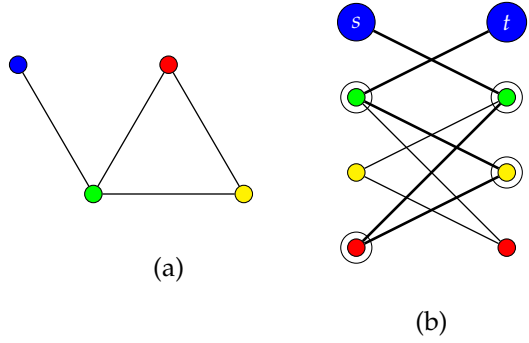
\begin{figure}[ht]
    \centering
    \begin{tikzpicture}[scale = 2]
	\node at (0,0)
		{\begin{tikzpicture}[scale = 2]
            \tikzstyle{edge}=[draw,line width = .5pt,-,black!100]
           	\tikzstyle{vert}=[draw,circle,scale=1]
           	\tikzstyle{vu}=[draw,circle,scale=.7,fill=green]
           	\tikzstyle{vv}=[draw,circle,scale=.7,fill=red]
           	\tikzstyle{vx}=[draw,circle,scale=.7,fill=blue]
           	\tikzstyle{vw}=[draw,circle,scale=.7,fill=yellow]
            \node[vu] (u) at (0,0) {};
            \node[vw] (w) at (1,0) {};
            \node[vv] (v) at (.5,.86) {};
            \node[vx] (x) at (-.5, .86) {};
            \draw[edge] (u) -- (v);
            \draw[edge] (v) -- (w);
            \draw[edge] (x) -- (u);
            \draw[edge] (w) -- (u);
            \node[label] (a) at (.5,-.5) {(a)};
        \end{tikzpicture}};
    \node at (2,0) {
        \begin{tikzpicture}[scale = 2]
        	\tikzstyle{vert}=[draw,circle,scale=.9]
           	\tikzstyle{vu}=[draw,circle,scale=.7,fill=green]
           	\tikzstyle{vv}=[draw,circle,scale=.7,fill=red]
           	\tikzstyle{vx}=[draw,circle,scale=.7,fill=blue]
           	\tikzstyle{vw}=[draw,circle,scale=.7,fill=yellow]
            \tikzstyle{edge}=[draw,line width = .5pt,-,black!100]
            \tikzstyle{path}=[draw,line width = 1pt,-,black!100]
            \tikzstyle{halo}=[draw,circle,scale=1.25,fill=white]
            \node[halo] (h1) at (0,0) {};
            \node[halo] (h2) at (1,.5) {};
            \node[halo] (h3) at (0,1) {};
            \node[halo] (h4) at (1,1) {};
            \node[vert, fill=blue,text=white] (x0) at (0, 1.5) {$s$};
            \node[vu] (u0) at (0, 1) {}; 
            \node[vw] (w0) at (0, .5) {}; 
            \node[vv] (v0) at (0, 0) {};
            \node[vert,fill=blue,text=white] (x1) at (1, 1.5) {$t$};
            \node[vu] (u1) at (1, 1) {};
            \node[vw] (w1) at (1, .5) {};
            \node[vv] (v1) at (1, 0) {};
           	\draw[path] (x0) -- (u1);
           	\draw[path] (x1) -- (u0);
            \draw[edge] (u0) -- (v1);
            \draw[edge] (v1) -- (w0);
            \draw[edge] (w0) -- (u1);
            \draw[path] (u1) -- (v0); 
            \draw[path] (v0) -- (w1); 
            \draw[path] (w1) -- (u0);
            \node[label] (b) at (.5,-.5) {(b)};
        \end{tikzpicture}};
        \end{tikzpicture}
    \caption{(a) A graph $G$ with an odd cycle. (b) The bipartite double graph $K^G$ with a path between the two green vertices.} \label{fig:OddCycle}
\end{figure}

On the other hand, if we are not promised that $G(x)$ is connected, we simply
need to check whether there is an odd path from any of the $n$ vertices of $G$
to itself, and we now show that doing this check does not increase the query
complexity. We use a similar strategy as with cycle detection:

\begin{theorem}
Let $G$ be a graph with $n$ vertices and $m$ edges. Then there is a bounded-error quantum query algorithm that detects an odd cycle (in effect, non-bipartiteness) in $O(\sqrt{nm})$ queries.
\end{theorem}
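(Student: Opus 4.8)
The plan is to reuse the parallel-composition trick from the cycle-detection argument in \cref{sec:Cycle}, but with the per-vertex bipartiteness test in place of the per-edge cycle test. For each vertex $u\in V(G)$ I would form the graph $\Gd{u,u}$ (the bipartite double of $G$ with $u_0$ relabeled $s$ and $u_1$ relabeled $t$), keeping every copy's edges associated with the literals of $x$ exactly as in the definition of $\Gd{u,v}(x)$, and then let $G'$ be the parallel composition of these $n$ graphs. By the remark immediately preceding the theorem (together with \cref{lemm:oddReduct} applied with $v=u$), $s$ and $t$ are connected in $\Gd{u,u}(x)$ if and only if the connected component of $G(x)$ containing $u$ has an odd cycle. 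Since parallel composition realizes logical \textsc{or}, $G'(x)$ has an $st$-path if and only if some component of $G(x)$ is non-bipartite, i.e.\ if and only if $G(x)$ contains an odd cycle. This reduces the problem to $st$-connectivity on $G'$, and it remains to bound the two quantities entering \cref{thm:stconn}.

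For the effective resistance (the case where $G(x)$ is non-bipartite), I would pick a vertex $u$ lying on an odd cycle of $G(x)$. Traversing that cycle is an odd closed walk at $u$ of length at most $n$, which lifts to an $st$-path of length $O(n)$ in $\Gd{u,u}(x)$ exactly as in the proof of \cref{lemm:oddReduct}. Hence by \cref{prop:length}, $R_{s,t}(\Gd{u,u}(x))=O(n)$. Under parallel composition the inverse resistances add (\cref{prop:ERparallel}), so the remaining copies can only increase $1/R_{s,t}$, and therefore $R_{s,t}(G'(x))=O(n)$.

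The capacitance bound (the bipartite case) is the step needing the most care, since the naive estimate is too weak: each $\Gd{u,u}$ has an $s$--$t$ cut of size $O(m)$, which after summing the $n$ parallel terms would only give $C_{s,t}(G'(x))=O(nm)$ and hence the suboptimal bound $O(n\sqrt{m})$. The key observation is that a valid $s$--$t$ cut in $\Gd{u,u}$ is the set of edges incident to $u_0$, whose number equals $\deg_G(u)$. Thus \cref{prop:ECcut} gives $C_{s,t}(\Gd{u,u}(x))<\deg_G(u)$, and because capacitances add in parallel (\cref{prop:ECparallel}),
\begin{align}
C_{s,t}(G'(x))<\sum_{u\in V(G)}\deg_G(u)=2m=O(m).
\end{align}
This telescoping of the degree sum is exactly what saves a factor of $n$ relative to the naive bound.

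Finally, feeding $R_{s,t}(G'(x))=O(n)$ and $C_{s,t}(G'(x))=O(m)$ into \cref{thm:stconn} yields the claimed query complexity $O(\sqrt{nm})$. I expect the main obstacle to be recognizing and justifying the degree-based cut bound and its telescoping; the remainder is a direct application of the resistance and capacitance composition properties and the parallel-composition machinery already established.
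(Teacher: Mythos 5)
Your construction is exactly the paper's: form $\Gd{u,u}$ for every vertex $u$, compose the $n$ copies in parallel, and reduce to $st$-connectivity. Where you genuinely diverge is in how you split the resistance--capacitance product of \cref{thm:stconn}. You bound $R_{s,t}$ by $O(n)$ using a single lifted odd cycle together with \cref{prop:length}, and then you must work to get $C_{s,t}=O(m)$: you observe that the edges incident to $s=u_0$ form an $s$--$t$ cut of size $\deg_G(u)$ in $\Gd{u,u}$, so by \cref{prop:ECcut} and the additivity of capacitance under parallel composition the total is at most $\sum_u \deg_G(u)=2m$. The paper instead settles for the naive cut bound $C_{s,t}=O(nm)$ (each of the $n$ copies contributes $O(m)$) and recovers the lost factor of $n$ on the resistance side: an odd cycle of length $k$ yields an $st$-path of length $k$ in $\Gd{w,w}(x)$ for each of the $k$ vertices $w$ on that cycle, so by \cref{prop:ERparallel} the parallel composition has $1/R_{s,t}\geq k\cdot(1/k)=1$, i.e.\ $R_{s,t}(\Gbi(x))=O(1)$ --- the same mechanism that drives the cycle-detection analysis. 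Both accountings are valid and give the identical product $O(nm)$; your remark that the naive capacitance bound is ``too weak'' is true only relative to your $O(n)$ resistance estimate, not intrinsically. Your degree-sum cut is a nice self-contained observation, while the paper's route has the advantage of reusing the parallel-paths argument already developed for $\Gcyc$ and exposes the sharper fact that the witness resistance is constant whenever an odd cycle exists.
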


\begin{proof}[Proof Sketch]
Let $\Gbi$ be the graph that consists of the the graphs $\Gd{u,u}$ composed in
parallel for all $u\in V(G)$. This amounts to evaluating the logical
\textsc{or} of there being an odd cycle connected to any vertex in $G$.

A similar analysis as in cycle detection shows that the effective resistance
of $\Gbi$ is $O(1)$, if there is an odd cycle. On the other hand, since there
are $n$ copies of $K^G$ in this new graph, and each copy has $m$ edges, the
largest possible cut is $O(nm)$. Applying \cref{thm:stconn} gives
the result.
\end{proof}

\begin{figure}[h!]
    \centering
    \begin{tikzpicture}[scale = 2]
	\node at (0,0)
		{\begin{tikzpicture}[scale = 2]
            \tikzstyle{edge}=[draw,line width = .5pt,-,black!100]
           	\tikzstyle{vert}=[draw,circle,scale=1]
           	\tikzstyle{vu}=[draw,circle,scale=.7,fill=green]
           	\tikzstyle{vv}=[draw,circle,scale=.7,fill=red]
           	\tikzstyle{vx}=[draw,circle,scale=.7,fill=blue]
           	\tikzstyle{vw}=[draw,circle,scale=.7,fill=yellow]
            \node[vu] (u) at (0,1) {};
            \node[vw] (w) at (0,0) {};
            \node[vx] (x) at (1,0) {};
            \node[vv] (v) at (1,1) {};
            \draw[edge] (u) -- (v);
            \draw[edge] (v) -- (x);
            \draw[edge] (x) -- (w);
            \draw[edge] (w) -- (u);
            \node[label] (l) at (.5,1.2) {$\ell$};
            \node[label] (a) at (.5,-.5) {(a)};
        \end{tikzpicture}};
    \node at (2,0) {
        \begin{tikzpicture}[scale = 2]
        	\tikzstyle{vert}=[draw,circle,scale=.9]
           	\tikzstyle{vu}=[draw,circle,scale=.7,fill=green]
           	\tikzstyle{vv}=[draw,circle,scale=.7,fill=red]
           	\tikzstyle{vx}=[draw,circle,scale=.7,fill=blue]
           	\tikzstyle{vw}=[draw,circle,scale=.7,fill=yellow]
            \tikzstyle{edge}=[draw,line width = .5pt,-,black!100]
            \tikzstyle{path}=[draw,line width = 1pt,-,black!100]
            \tikzstyle{halo}=[draw,circle,scale=1.25,fill=white]
            \node[halo] (h2) at (0,1.5) {};
            \node[halo] (h3) at (1,1) {};
            \node[halo] (h4) at (0,.5) {};
            \node[vert] (s) at (0, 2) {$s$};
            \node[vu] (u0) at (0, 1.5) {};
            \node[vw] (w0) at (0, 1) {};
            \node[vx] (x0) at (0, .5) {};
            \node[vv] (v0) at (0, 0) {};
            \node[vu] (u1) at (1, 1.5) {};
            \node[vw] (w1) at (1, 1) {};
            \node[vx] (x1) at (1, .5) {};
            \node[vert, fill=red, text=white] (v1) at (1, 0) {$t$};
            \draw[path] (s) -- (u0);
            \draw[path] (u0) -- (w1);
            \draw[edge] (u1) -- (w0);
            \draw[path] (w1) -- (x0);
            \draw[edge] (w0) -- (x1);
            \draw[path] (x0) -- (v1);
            \draw[edge] (x1) -- (v0);
            \node[label] (l) at (-.2,1.75) {$\ell$};
            \node[label] (b) at (.5,-.5) {(b)};
        \end{tikzpicture}};
        \end{tikzpicture}
    \caption{(a) An simple example of a graph $G$ with an even-length cycle. (b) The bipartite double graph $K^G_{green,red}$ connected in series with $G^1_{\{red,green\}}$. We see there is a path from $s$ to $t$ in this graph, corresponding to an even-length cycle passing through $\ell$.} \label{fig:Kg(x)}
\end{figure}
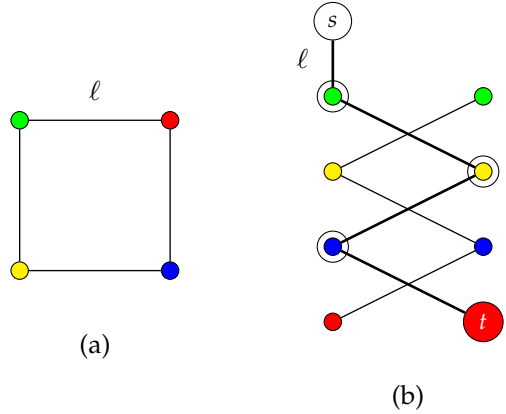

Finally, we show how to detect even cycles:
\begin{theorem}
Let $G$ be a graph with $n$ vertices and $m$ edges. Then there is a bounded-error quantum query algorithm that detects an even-length cycle in $O(\sqrt{nm})$
queries.
\end{theorem}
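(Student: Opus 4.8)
The plan is to adapt the cycle-detection reduction, replacing its ``$u$-$v$ path avoiding $\ell$'' test by an ``odd-length $u$-$v$ path avoiding $\ell$'' test, since an even-length cycle through an edge $\ell=\{u,v\}$ is exactly the edge $\ell$ together with an odd-length $u$-$v$ path that does not use $\ell$. For each edge $\ell=\{u,v\}\in E(G)$ I would form a branch $H_\ell$ by composing $G^1_\ell$ (testing whether $\ell$ is present, the logical \textsc{and} factor) in series with the bipartite double $\Gd{u,v}$ of $G$ with the two edges corresponding to $\ell$ deleted, which by \cref{lemm:oddReduct} tests for an odd-length $u$-$v$ path avoiding $\ell$. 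Composing all the $H_\ell$ in parallel then gives a graph $G'$ encoding the logical \textsc{or} over all $\ell$ of ``$\ell\in E(G(x))$ and $G(x)$ has an odd-length $u$-$v$ path avoiding $\ell$'', and the goal is to show that $G'(x)$ has an $st$-path if and only if $G(x)$ contains an even-length cycle (see \cref{fig:Kg(x)}).

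I expect this biconditional to be the main obstacle. The forward direction (even cycle $\Rightarrow$ $st$-path) is immediate, since one may traverse the branch $H_\ell$ for any edge $\ell$ lying on the even cycle. The reverse direction is delicate because \cref{lemm:oddReduct} certifies only an odd-length \emph{walk} from $u$ to $v$, not a simple path; I would therefore have to rule out a spurious $st$-path arising from an odd walk that does not close into a single even cycle (for example one threading through two edge-disjoint odd cycles linked by a path). I would attack this by taking a shortest odd $u$-$v$ walk avoiding $\ell$ and analyzing its repeated vertices, and I anticipate that this is precisely where either a promise on $G$ or a more careful construction may be needed.

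For the query complexity I would invoke \cref{thm:stconn} and bound its two figures of merit exactly as in the cycle-detection and bipartiteness arguments. When an even cycle of length $L$ is present, its $L$ edges supply $L$ edge-disjoint branches $H_\ell$, each carrying an $st$-path of length $L$ and hence (by \cref{prop:length}) of resistance at most $L$; combining $L$ such branches in parallel (\cref{prop:ERparallel}) gives effective resistance at most $L/L=1$, so $R_{s,t}(G'(x))=O(1)$. When there is no even cycle, I would first use the structural fact that every block of an even-cycle-free graph is a single edge or an odd cycle, whence $|E(G(x))|=O(n)$. Then by \cref{prop:ECparallel} the capacitance is the sum of the branch capacitances: each of the at most $m$ absent-edge branches contributes $O(1)$ (by \cref{prop:noEdge,prop:ECseries}), while each of the $O(n)$ present-edge branches, whose $G^1_\ell$ factor has infinite capacitance (\cref{prop:ECinf}), contributes at most the capacitance of its bipartite double, which by \cref{prop:ECseries,prop:ECcut} is bounded by a cut of size $O(m)$. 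This yields $C_{s,t}(G'(x))=O(nm)$.

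Combining these bounds through \cref{thm:stconn} gives query complexity $O\bigl(\sqrt{R_{s,t}\cdot C_{s,t}}\bigr)=O(\sqrt{nm})$, and the space stays logarithmic because $G'$ has only $\mathrm{poly}(n,m)$ vertices and edges. The genuinely new ingredient relative to the earlier sections is the correctness argument of the second paragraph; the resistance and capacitance estimates are direct analogues of the $\Gcyc$ and $\Gbi$ computations.
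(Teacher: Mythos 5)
Your construction is exactly the paper's: the branch you call $H_\ell$ is the paper's $G^E_\ell$, namely $G^1_\ell$ in series with $K^{G^-_\ell}_{u,v}$, composed in parallel over all $\ell\in E(G)$; and your resistance bound ($O(1)$ from the $L$ parallel branches supplied by an even cycle of length $L$) and capacitance bound ($O(nm)$, using that an even-cycle-free graph has $O(n)$ edges) are the same calculations the paper performs before invoking \cref{thm:stconn}.

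The difficulty you flag in your second paragraph is a genuine gap --- and it is a gap in the paper's proof as well, not something the paper resolves. The paper simply asserts that an $st$-path in $K^{G^-_\ell}_{u,v}$ is equivalent to an odd-length path from $u$ to $v$ avoiding $\ell$, but connectivity of $u_0$ and $v_1$ in the bipartite double only certifies an odd-length \emph{walk}, and an odd walk closed up by the edge $\ell$ need not contain a simple even cycle. Concretely, let $G(x)$ consist of two triangles $\{u,v,w\}$ and $\{u,a,b\}$ sharing only the vertex $u$, and take $\ell=\{u,v\}$. Then $\ell\in E(G(x))$, and $u_0\to a_1\to b_0\to u_1\to w_0\to v_1$ is an $st$-path in $K^{G^-_\ell}_{u,v}(x)$, projecting to the odd walk $u,a,b,u,w,v$ of length $5$ that avoids $\ell$; hence $G^E_\ell(x)$ has an $st$-path. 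Yet $G(x)$ has no even cycle, since its only simple cycles are the two triangles. So the reduction as described (by you and by the paper) admits false positives, and the ``only if'' direction of the claimed biconditional fails. Your instinct that ``either a promise on $G$ or a more careful construction may be needed'' is correct, but since you do not supply that repair, your proposal does not yet constitute a proof; to be fair, neither does the paper's own argument, which passes over exactly this point in silence.
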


\begin{proof}
For an edge $\ell=\{u,v\}\in E(G)$, note that there is an $st$-path in
$K^{G^-_\ell}_{u,v}$ if and only if there is an odd-length path from $u$ to
$v$ that does not use the edge $\ell$ itself. Thus if we consider the graph
composed of $G^1_\ell$ and $K^{G^-_\ell}_{u,v}$ in series, which we denote $G^E_\ell$,
there is an $st$-path if and only if there is an even-length cycle through
$\ell$. Finally, if we compose the graphs $G^E_\ell$ in parallel for all
$\ell\in E(G)$, we obtain a graph that has an $st$-path if and only if there
is an even cycle passing through some edge in $G$, as in \cref{fig:Kg(x)}.

As in our previous analyses of cycle detection and bipartiteness, if there is
an even cycle, the effective resistance will be $O(1).$ On the other hand, if
there is no even-length cycle, then it is a fairly well known fact that the number of
edges in $G$ is $O(n)$. Then similar to previous analyses, for each graph
$G^E_\ell$ such that $\ell\in E(G)$, we have that the cut is $O(m)$.
Otherwise, for each graph $G^E_\ell$ such that $\ell\in E(G)$, we have that
the cut is $O(1)$. Thus a bound on the size of the total cut is
$O(n^2+nm)=O(nm)$ (assuming that $n=O(m)$.) Applying \cref{thm:stconn} gives
the result.
\end{proof}


\section{Acknowledgments}

This research was sponsored by the Army Research Office
and was 
accomplished under Grant Number 
W911NF-18-1-0286.
The views and conclusions contained in this 
document are those of the authors and should not be interpreted as representing the official policies, either 
expressed or implied, of the Army Research Office
or the U.S. Government.  The U.S. Government is 
authorized to reproduce and distribute reprints for Government purposes notwithstanding any copyright 
notation herein.



\bibliography{connectivity}
\bibliographystyle{plain}

\appendix
\section{Effective Resistance of $\Gcyc$}\label{app:RGcyc}

We relate the effective resistance across edges in
$G$ to the effective resistance
between $s$ and $t$ in $\Gcyc$. (The proof also applies to $G(x)$ and $\Gcyc(x)$.) We will write $R_{s,t}(\Gcyc)$
in terms of a sum of $R_{u,v}(G)$ where $(u,v,\ell)$ is an edge on a cycle 
in $G$.

Consider an edge $\{u,v\}\in E(G)$. Then using \cref{prop:ERparallel} (for graphs composed in parallel), we have
\begin{align}
\frac{1}{R_{u,v}(G)} 
= 1 + \frac{1}{R_{s,t}(G^-_\ell)}.
\end{align}
Rearranging, we have
\begin{align}
R_{s,t}(G^-_\ell) = \frac{R_{u,v}(G)}{1 -R_{u,v}(G)}
\end{align}
Then using \cref{prop:ERseries} (for graphs composed in series), we have that 
\begin{align}
R_{s,t}(G_\ell)&=R_{s,t}(G^-_\ell) + 1 \nonumber\\
&= 
\frac{R_{u,v}(G)}{1 - R_{u,v}(G)} + 1 \nonumber\\
&= \frac{1}
{1-  R_{u,v}(G)}.
\end{align}
Finally, using \cref{prop:ERparallel} (graphs composed in parallel) again, we have
\begin{align}
\frac{1}{R_{s,t}(\Gcyc)} = \sum_{\{u,v,\} \in E(G)} 
1- R_{u,v}(G).
\end{align}

\end{document}